\documentclass[11pt]{article}
\usepackage[T1]{fontenc}
\usepackage{graphicx}

\usepackage{fullpage}

\setcounter{totalnumber}{50}
\setcounter{topnumber}{50}
\setcounter{bottomnumber}{50}

\sloppy

\usepackage[utf8]{inputenc}
\usepackage{nicefrac}
\usepackage{tikz}
\usetikzlibrary{decorations.pathreplacing,calligraphy}
\usepackage{amsmath}
\usepackage{amsthm}
\usepackage{amsfonts}
\usepackage{multicol}
\usepackage{booktabs}   %
\usepackage[linesnumbered,lined,boxed,commentsnumbered]{algorithm2e}

\bibliographystyle{plain}

\newtheorem{theorem}{Theorem}
\newtheorem{proposition}{Proposition}
\newtheorem{definition}{Definition}
\newtheorem{corollary}{Corollary}

\newcommand{\np}{{{\mathrm{NP}}}}
\newcommand{\p}{{{\mathrm{P}}}}
\newcommand{\pref}{\succ}

\newcommand{\calR}{\mathcal{R}}
\newcommand{\calS}{\mathcal{S}}
\newcommand{\calT}{\mathcal{T}}

\newcommand{\score}{{{\mathrm{score}}}}

\newcommand{\cc}{{{\mathrm{CC}}}}
\newcommand{\pav}{{{\mathrm{PAV}}}}

\begin{document}
\title{Robustness of Greedy Approval Rules\thanks{See
    \texttt{https://github.com/Project-PRAGMA/Greedy-Robust-EUMAS-2022}
    for the source code of the experiments performed in this paper.}}

\author{Piotr Faliszewski\\
  AGH University\\
  Krakow, Poland
  \and
  Grzegorz Gawron\\
  AGH University and\\ VirtusLab\\
  Krakow, Poland
  \and
  Bartosz Kusek\\
  AGH University\\
  Krakow, Poland}

\maketitle
\begin{abstract}
  We study the robustness of GreedyCC, GreedyPAV, and Phargm{\'e}n's
  sequential rule, using the framework introduced by Bredereck et
  al.~\cite{bre-fal-kac-nie-sko-tal:c:robustness} for the case of
  (multiwinner) ordinal elections and adopted to the approval setting
  by Gawron and
  Faliszewski~\cite{gaw-fal:c:approval-robustness}. First, we show
  that for each of our rules and every committee size $k$, there are
  elections in which adding or removing a certain approval causes the
  winning committee to completely change (i.e., the winning committee
  after the operation is disjoint from the one before the
  operation). Second, we show that the problem of deciding how many
  approvals need to be added (or removed) from an election to change
  its outcome is $\np$-complete for each of our rules. Finally, we
  experimentally evaluate the robustness of our rules in the presence
  of random noise.
\end{abstract}

\section{Introduction}
We study the extent to which perturbing the input of several
approval-based multiwinnner voting rules affects their outcome.  We
focus on GreedyCC, GreedyPAV, and Phragm{\'e}n rules, whose common
feature is that they choose members of the winning committee in a
sequential, greedy way.

In a multiwinner approval election, each voter indicates which
candidates he or she finds appealing---i.e., which ones he or she
approves---and a voting rule provides the winning committee (i.e., a
fixed-size group of candidates).
For example, the approval voting rule (AV) chooses committees of
individually excellent candidates (i.e., the most approved ones), the
proportional approval voting rule (PAV) ensures proportional
representation of the voters, and the Chamberlin-Courant rule (CC)
seeks a diverse committee.  Unfortunately, while AV can be computed in
polynomial time, finding the winning committees under the other two
rules is intractable. Luckily, there are many workarounds for this
issue. For example, instead of CC we can use its approximate variant
GreedyCC, and instead of PAV we can either use GreedyPAV or the
Phragm{\'e}n rule (see the oveview of Lackner and
Skowron~\cite{lac-sko:t:approva-multiwinner-survey} for a discussion
of these rules and their properties).
While there is robustness analysis of AV, CC, and
PAV~\cite{gaw-fal:c:approval-robustness}, analogous results are
missing for these greedy rules and our goal is to fill this hole.

We use the robustness framework of Bredereck et
al.~\cite{bre-fal-kac-nie-sko-tal:c:robustness}, as adopted to the
case of approval elections by Gawron and
Faliszewski~\cite{gaw-fal:c:approval-robustness}. This framework
consists of the following elements:
\begin{enumerate}
\item Evaluating the extent to which introducing a single small change
  may affect the outcome of a rule. For example, we say that a rule
  has \textsc{Add}-robustness level equal to $\ell$ if adding a single
  approval results in, at most, replacing~$\ell$ committee
  members. \textsc{Remove}-robustness level is defined analogously,
  but for the case of removing a single approval. Robustness levels of
  a rule describe its worst-case behavior under minimal perturbations
  of the input.\footnote{Whenever we speak of ``robustness levels''
    without indicating whether we mean the \textsc{Add} or
    \textsc{Remove} variant, we collectively refer to both.}

\item Establishing the complexity of the \textsc{Robustness-Radius}
  problem, which asks if a given number of basic operations (such as
  adding or removing approvals) suffices to change the election
  outcome. Solving this problem for various elections would measure a
  rule's robustness to targeted attacks on a per-instance basis. However,
  since \textsc{Robustness-Radius} is $\np$-complete for many rules,
  neither Bredereck et al.~\cite{bre-fal-kac-nie-sko-tal:c:robustness}
  nor Gawron and Faliszewski~\cite{gaw-fal:c:approval-robustness}
  carried out such experiments and we follow them in this respect.

\item Computing for various elections how many randomly selected basic
  operations are needed, on average, to change their outcomes.  This
  measures the rules' robustness to random noise.
\end{enumerate}
Gawron and Faliszewski~\cite{gaw-fal:c:approval-robustness} considered
AV, SAV (a rule similar in spirit to AV), CC, and PAV.  They have
shown that AV has $\{$\textsc{Add}, \textsc{Remove}$\}$-robustness
levels equal to $1$, while the other rules have them equal to the
committee size (although there are some intricacies for the case of
SAV). Further, they have shown that
\textsc{Robustness-Radius} is in $\p$ for AV and SAV, but is
$\np$-hard for CC and PAV. Given hardness of computing CC and PAV,
this last result is not very surprising, but Gawron and Faliszewski
have also shown fixed-parameter tractable algorithms for the
respective problems.  Unfortunately, Gawron and
Faliszewki~\cite{gaw-fal:c:approval-robustness} did not pursue
experimental studies (as some of their rules are $\np$-hard, even
computing robustness to random noise would require nontrivial
computing resources).

\paragraph{Our Contribution.}
We complement the work of Gawron and
Faliszewski~\cite{gaw-fal:c:approval-robustness} by considering
GreedyCC, GreedyPAV, and Phragm{\'e}n. We show that their
$\{$\textsc{Add}, \textsc{Remove}$\}$-robustness levels are equal to
the committee size and we show that the \textsc{Robustenss-Radius}
problem is $\np$-complete for each of them. Since our rules are
polynomial-time to compute, this result is not as immediate as in the
case of CC or PAV.  Finally, we experimentally evaluate the robustness
of our rules, and of AV, to random noise.

\paragraph{Related Work.}

In addition to the works of Bredereck et
al.~\cite{bre-fal-kac-nie-sko-tal:c:robustness} and Gawron and
Faliszewski~\cite{gaw-fal:c:approval-robustness}, our results are
closely related to the line of work on the complexity of bribery in
elections.  In a bribery problem, we are given an election and we ask
if a certain outcome---such as including a certain candidate among the
winners (in the constructive setting) or precluding a certain
candidate from winning (in the destructive setting)---can be achieved
by modifying the preferences of the voters with operations of a
certain cost.  The study of bribery was initiated by Faliszewski,
Hemaspaandra, and Hemaspaandra~\cite{fal-hem-hem:j:bribery} and was
continued by many others (see the overview of Faliszewski and
Rothe~\cite{fal-rot:b:control-bribery}).  The
\textsc{Robustness-Radius} problem can be seen as a variant of
destructive bribery. \textsc{Swap-Bribery}, introduced by Elkind,
Faliszewski, and Slinko~\cite{elk-fal-sli:c:swap-bribery}, was used to
study the robustness of single-winner voting rules by Shiryaev et
al.~\cite{shi-yu-elk:c:robust-winners}, Baumeister and
Hogrebe~\cite{bau-hog:c:robustness}, and Boehmer et
al.~\cite{boe-bre-fal-nie:c:counting-swap-bribery}.  Magrino et
al.~\cite{mag-riv-she-wag:c:stv-bribery},
Cary~\cite{car:c:stv-bribery}, and Xia~\cite{xia:c:margin-of-victory}
used variants of destructive bribery to study margin of victory under
various single-winner voting rules.  The main difference between the
studies of robustness and margin of victory is that in the former, the
authors typically use fine-grained bribery variants that allow for
making small modifications in the votes (in our case, these mean
adding or removing single approvals), whereas in the latter the
authors typically use coarse-grained bribery variants that allow
operations that change the whole votes arbitrarily.

Our work is closely related to that of Faliszewski, Skowron and
Talmon~\cite{fal-sko-tal:c:bribery-success}, who study bribery of
approval-based multiwinner rules under adding, removing, and swapping
approvals. The main difference between our work and theirs is that
they focus on the constructive setting and we study the
destructive one.

\section{Preliminaries}

We write $\mathbb{N}_+$ to denote the set $\{1,2, \ldots\}$ and for each
integer $t$, by $[t]$ we mean the set $\{1, \ldots, t\}$. We use the
Iverson bracket notation, i.e., given a logical expression~$P$, we write
$[P]$ to mean $1$ if $P$ is true and to mean $0$ otherwise.

\subsection{Approval Elections and Multiwinner Rules}
An election is a pair $E = (C,V)$, where $C = \{c_1, \ldots, c_m\}$ is
a set of candidates and $V = (v_1, \ldots, v_n)$ is a collection of
voters. Each voter $v_i$ has a set $A(v_i) \subseteq C$ of candidates
that he or she approves. The approval score of a candidate is the
number of voters that approve him or her.

A multiwinner voting rule $\calR$ is a function that given an election
$E$ and committee size $k$ outputs a family of size-$k$ winning
committees (i.e., a family of size-$k$ subsets of $C$).  If a rule
always outputs a unique committee, then we say that it is resolute (in
practice, non-resolute rules require tie-breaking rules, but we
disregard this issue). For example, the approval voting rule (the AV
rule) selects committees of $k$ candidates with the highest approval
scores.  AV belongs to the class of Thiele rules, which are defined as
follows: Consider an election $E = (C,V)$ and a nonincreasing function
$\lambda \colon \mathbb{N}_+ \rightarrow [0,1]$, such that
$\lambda(1) = 1$ (we will refer to functions satisfying these
conditions as OWA functions\footnote{The name refers to the class of
  order-weighted operators (OWA operators), introduced by
  Yager~\cite{yag:j:owa} and used by Skowron et
  al.~\cite{sko-fal-lan:j:collective} to define a class of rules
  closely related to the Thiele ones}).  We define the $\lambda$-score
of a set $S \subseteq C$ of candidates as:
\[
  \textstyle \lambda\hbox{-}\score_E(S) = \sum_{v \in V} \left(
    \sum_{t=1}^{|S \cap A(v)|} \lambda(t) \right).
\]
Given an election $E$ and committee size $k$, the $\lambda$-Thiele
rule outputs those size-$k$ committees $W$ that have the highest
$\lambda$-score. For example, the AV rule uses
the constant  function $\lambda_{AV}(i) = 1$. 
This rule is meant to choose committees of individually excellent
candidates, hence it considers the candidates with the highest individual
approval scores.
We are also interested in the
Chamberlin--Courant rule (the CC rule) and the proportional approval
voting rule (the PAV rule), which use functions
$\lambda_{\cc}(i) = [i = 1]$ and
$\lambda_{\pav}(i) = \nicefrac{1}{i}$, respectively.
Under CC, a voter assigns score $1$ to a committee exactly if he or she
approves at least one of its members, and assigns score $0$ otherwise.
This rule was introduced by Chamberlin and Courant~\cite{cha-cou:j:cc}
in the context of ordinal elections and was adapted to the approval
setting by Procaccia et
al.~\cite{pro-ros-zoh:j:proportional-representation} and Betzler et
al.~\cite{bet-sli-uhl:j:mon-cc}.  Its purpose is to find diverse
committees, so that as many voters as possible feel represented by at
least one of the committee members.  The PAV rule was introduced by
Thiele~\cite{thi:j:pav} and its more elaborate scoring system is
designed to ensure proportional representation of the
voters~\cite{azi-bri-con-elk-fre-wal:j:justified-representation,bri-las-sko:j:apportionment}.

Both CC and PAV are $\np$-hard to
compute~\cite{pro-ros-zoh:j:proportional-representation,sko-fal-lan:j:collective,azi-gas-gud-mac-mat-wal:c:approval-multiwinner}
and we are mostly interested in the rules defined by their greedy
approximation algorithms. These algorithms run as follows (let $E = (C,V)$ be the
input election, $k$ be the committee size, and $\lambda$ be the OWA
function used):
\begin{enumerate}
\item[] We start with an empty committee $W = \emptyset$ and perform
  $k$ iterations, where in each iteration we extend $W$ with a single
  candidate $c$ that %
  maximizes the
  value
  $\lambda\hbox{-}\score_E(W \cup \{c\}) -
  \lambda\hbox{-}\score_E(W)$.  If several candidates satisfy this
  condition then we break the tie according to a given tie-breaking
  order. We output $W$ as the unique winning committee.
\end{enumerate}
We refer to the incarnations of this algorithm for $\lambda_\cc$ and
$\lambda_\pav$ as GreedyCC and GreedyPAV, respectively.
When analyzing an $i$-th iteration of these algorithms, for each
candidate $c$ we refer to the value
$\lambda\hbox{-}\score_E(W \cup \{c\}) - \lambda\hbox{-}\score_E(W)$ as the
score of $c$.  For GreedyCC, we imagine that as soon as a candidate is
included in the committee, all the voters that approve him or her are
removed (indeed, these voters would not contribute positive score to any
further candidates).

We are also interested in the Phragm{\'e}n rule (or, more specifically,
in the Phragm{\'e}n's sequential rule, but we use the shorter name in this paper). The Phragm{\'e}n rule proceeds according to the
following scheme ($E = (C,V)$ is the input election and~$k$ is the
committee size):
\begin{enumerate}
\item[] Initially, we have committee $W = \emptyset$. The voters start
  with no money, but they receive it at a constant rate (so, at each
  time point $t \in \mathbb{R}$, $t \geq 0$, each voter has in total
  received money of value $t$). At every time point for which there is
  a candidate $c$ not included in $W$ who is approved by voters that
  jointly have one unit of money, this candidate is ``purchased.''
  That is, candidate~$c$ is added to $W$ and the voters that approve
  him or her have all their money reset to $0$ (i.e., they pay for
  $c$). If several candidates can be purchased at the same time, we
  consider them in a given tie-breaking order. The process continues
  until $W$ reaches size $k$ or all the remaining candidates have
  approval score zero (in which case we extend $W$ according to the
  tie-breaking order). We output $W$ as the unique winning committee.
\end{enumerate}

Similarly to PAV, Phragm\'en provides committees that ensure
proportional representation of the
voters~\cite{san-elk-lac-fer-fis-bas-sko:c:pjr}.  For a detailed
discussion of these rules (including an alternative definition of
Phragm{\'e}n), we point the reader to the survey of Lackner and
Skowron~\cite{lac-sko:t:approva-multiwinner-survey}.  Faliszewski et
al.~\cite{fal-sko-sli-tal:b:multiwinner-voting} offer a general
overview of multiwinner voting.
Note that GreedyCC, GreedyPAV, and Phragm{\'e}n are resolute.

\subsection{Robustness of Multiwinner Voting Rules}

We use the robustness framework  introduced by Bredereck
et al.~\cite{bre-fal-kac-nie-sko-tal:c:robustness} for the ordinal
setting and adapted to the approval one by Gawron and
Faliszewski~\cite{gaw-fal:c:approval-robustness}.
In particular, we consider the \textsc{Add} and \textsc{Remove}
operations, where the former means adding a single approval to some
vote and the latter means removing a single approval from a vote. Let
us fix committee size $k$. For an operation \textsc{Op}
$\in \{\text{\textsc{Add}}, \text{\textsc{Remove}}\}$, we say that a
multiwinner voting rule $\calR$ is $\ell$-\textsc{Op}-robust (or, that
its \textsc{Op}-robustness level is~$\ell$) if~$\ell$ is the smallest
integer such that for every election $E = (C,V)$, where
$|C| \geq 2k$,\footnote{This is mostly a technical assumption, to ensure that there are enough candidates
so that all members of a committee can be replaced with non-members.}  and
every election $E'$ obtained from $E$ with a single operation of type
\textsc{Op}, the following holds:
\begin{enumerate}
\item[] For each committee $W \in \calR(E,k)$ there is a committee
  $W' \in \calR(E',k)$ such that $|W \cap W'| \geq k - \ell$ (i.e.,
  for every winning committee of $E$ there is a winning committee of
  $E'$ that differs in at most $\ell$ candidates).
\end{enumerate}
Intuitively, if a rule is $1$-\textsc{Add}-robust then adding a single
approval in an election held according to this rule may, at most, lead
to replacing a single member of the winning committee. On the other
hand, if a rule is $k$-\textsc{Add}-robust, then adding a single
approval sometimes leads to replacing the whole committee.  Gawron and
Faliszewski~\cite{gaw-fal:c:approval-robustness} have shown that AV is
$1$-\{\textsc{Add},\textsc{Remove}\}-robust, whereas both CC and PAV
are $k$-\{\textsc{Add},\textsc{Remove}\}-robust (they also considered
the \textsc{Swap} operation, which means moving an approval from one
candidate to another within a vote, and obtained analogous results for
it).

Following Bredereck et al.~\cite{bre-fal-kac-nie-sko-tal:c:robustness}
and Gawron and Faliszewski~\cite{gaw-fal:c:approval-robustness}, we
also study the \textsc{Robustness-Radius} problem. Intuitively, in
this problem we are interested in the smallest number of operations
required to change the election result. The more are necessary, the
more robust is the result on the given election.

\begin{definition}
  Let $\calR$ be a multiwinner voting rule and let \textsc{Op} be
  either \textsc{Add} or \textsc{Remove}. In the
  $\calR$-\textsc{Op}-\textsc{Robustness-Radius} problem we are given
  an election~$E$, a committee size $k$, and a nonnegative integer $B$
  (referred to as the budget). We ask if it is possible to perform up
  to $B$ operations of type \textsc{Op} so that for the resulting
  election $E'$ it holds that $\calR(E,k) \neq \calR(E',k)$.
\end{definition}

\section{Robustness Level}

The results of Bredereck et
al.~\cite{bre-fal-kac-nie-sko-tal:c:robustness} and Gawron and
Faliszewski~\cite{gaw-fal:c:approval-robustness} give some intuitions
regarding robustness levels that we may expect from multiwinner
rules.
On the one hand, simple, polynomial-time computable rules that
focus on individual excellence tend to have robustness levels equal to~$1$ (this includes, e.g., AV in the approval setting, and a number of
rules in the ordinal one).
Indeed,  Bredereck et
al.~\cite[Theorem~6]{bre-fal-kac-nie-sko-tal:c:robustness} have shown that  if a
rule selects a committee with the highest score, this score is easily
computable, and the rule's robustness level is bounded by a constant,
then some winning committee can be computed in polynomial time.
On the other hand, more involved rules that focus on proportionality
or diversity---in particular those $\np$-hard to compute---tend to
have robustness levels equal to the committee size.
However, regarding rules that form the committee sequentially, so far
there was only one data point: Bredereck et
al.~\cite{bre-fal-kac-nie-sko-tal:c:robustness} have shown that single
transferable vote (STV; a well-known rule for the ordinal setting)
has robustness levels equal to the committee size. We provide
three more such examples by showing that GreedyCC, GreedyPAV, and
Phragm{\'e}n also have robustness levels equal to the committee size.

First, we consider the relationship between \textsc{Add-Robustness}
and \textsc{Remove-Robustness} for resolute rules.

\begin{proposition}\label{prop:k-robustness}
  Let $\calR$ be a resolute multiwinner voting rule, and let $\ell$ be
  a positive integer.  $\calR$ is $\ell$-\textsc{Add}-robust if and
  only if it is $\ell$-\textsc{Remove}-robust.
\end{proposition}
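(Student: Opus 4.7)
The key observation is that \textsc{Add} and \textsc{Remove} are inverse operations: if $E'$ is obtained from $E$ by adding the approval of candidate $c$ by voter $v$, then $E$ is obtained from $E'$ by removing that same approval, and vice versa. Neither operation changes the candidate set, so the technical assumption $|C| \geq 2k$ holds for $E$ if and only if it holds for $E'$. I would formalize this by introducing the set of ``admissible'' pairs
\[
  \mathcal{P}_A = \{(E,E') : |C_E| \geq 2k,\ E' \text{ is obtained from } E \text{ by one \textsc{Add}}\},
\]
together with the analogous $\mathcal{P}_R$, and noting that $(E,E') \in \mathcal{P}_A$ iff $(E',E) \in \mathcal{P}_R$.

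Next I would use resoluteness to simplify the robustness condition. Since $\calR(E,k) = \{W\}$ and $\calR(E',k) = \{W'\}$ are singletons, the general condition ``for each $W \in \calR(E,k)$ there is $W' \in \calR(E',k)$ with $|W \cap W'| \geq k-\ell$'' collapses to the single inequality $|W \cap W'| \geq k-\ell$, which is manifestly symmetric in $(E,E')$. Combining this with the pairing above, the family of constraints that $\calR$ must satisfy to be $\ell$-\textsc{Add}-robust is exactly the family it must satisfy to be $\ell$-\textsc{Remove}-robust, just indexed in reverse.

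Finally, I would argue both implications in one line each: if $\calR$ is $\ell$-\textsc{Add}-robust and $(E,E') \in \mathcal{P}_R$, apply the \textsc{Add}-hypothesis to the pair $(E',E) \in \mathcal{P}_A$ and transfer the symmetric intersection bound back; the converse is identical with the roles swapped. Since the two definitions pin down the same set of valid $\ell$'s, their minima coincide, giving $\ell$-\textsc{Add}-robustness iff $\ell$-\textsc{Remove}-robustness.

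The proof has no real obstacle, since the argument is essentially a bookkeeping symmetry. The only place one has to be careful is in handling the ``smallest $\ell$'' in the definition of robustness level and in the reliance on resoluteness: without the singleton assumption the intersection condition involves an asymmetric existential quantifier over winning committees, and the equivalence can fail.
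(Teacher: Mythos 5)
Your proposal is correct and follows essentially the same route as the paper's proof: exploit that \textsc{Add} and \textsc{Remove} are inverse operations on pairs of elections, use resoluteness to reduce the robustness condition to the symmetric bound $|W \cap W'| \geq k-\ell$, and observe that the set of valid $\ell$'s (hence the minimum) is the same for both operations. Your explicit handling of the ``smallest $\ell$'' via coinciding sets of valid values is, if anything, slightly more careful than the paper's brief remark that witnessing elections with a difference of exactly $\ell$ candidates transfer between the two settings.
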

\begin{proof}
  Let us fix committee size $k$ and a resolute multiwinner rule
  $\calR$. Further, assume that $\calR$ is $\ell$-\textsc{Add}-robust
  for some integer $\ell$. We will show that it also is
  $\ell$-\textsc{Remove}-robust. To see this, let us consider two
  elections, $E$ and $E'$, where $E'$ is obtained from $E$ by removing
  an approval and both elections contain at least $2k$
  candidates. Let~$W$ be the unique winning committee in $\calR(E,k)$
  and $W'$ be the unique winning committee in $\calR(E',k)$. Since
  $\calR$ is $\ell$-\textsc{Add}-robust, we know that $W$ and $W'$
  differ by at most $\ell$ candidates (it suffices to apply the
  defintion of $\ell$-\textsc{Add}-robustness, but with the roles of
  $E$ and $E'$ reversed). Similarly, we see that there are two such
  elections whose winning committees differ by exactly $\ell$
  candidates.
  By applying analogous reasoning, we see that if $\calR$ is
  $\ell$-\textsc{Remove}-robust then it is also
  $\ell$-\textsc{Add}-robust.
\end{proof}

Next we show that GreedyCC, GreedyPAV, and Phragm{\'e}n are
$k$-$\{$\textsc{Add, Remove}$\}$-robust.

\begin{theorem}
  Let $k$ be the committee size.  For each multiwinner rule $\calR$ in
  $\{\text{GreedyCC},\text{GreedyPAV},\text{Phragm{\'e}n}\}$, $\calR$ is both
  $k$-\textsc{Add}-robust and $k$-\textsc{Remove}-robust.\end{theorem}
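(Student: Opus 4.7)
By Proposition~\ref{prop:k-robustness}, since all three rules are resolute, it suffices, for each $\calR \in \{\text{GreedyCC}, \text{GreedyPAV}, \text{Phragm\'en}\}$, to exhibit (for every committee size $k$) an election $E$ on a candidate set $C$ with $|C| \geq 2k$ and an election $E'$ obtained from $E$ by a single \textsc{Add} operation such that $\calR(E,k)$ and $\calR(E',k)$ are disjoint. Once we have this, the \textsc{Remove}-robustness counterpart follows from Proposition~\ref{prop:k-robustness}, since the view of $E$ as obtained from $E'$ by a single \textsc{Remove} operation yields the symmetric witness.

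My plan is to build, for each rule, a ``cascading'' instance on the candidate set $A \cup B$, where $A = \{a_1,\ldots,a_k\}$ and $B = \{b_1,\ldots,b_k\}$, such that $\calR(E,k) = A$ and $\calR(E',k) = B$. The intuition common to all three constructions is to set the voter approvals so that (i) in $E$ the candidate $a_1$ is selected first and, once $a_1$ is chosen, the sequential greedy/Phragm\'en dynamics force $a_2,\ldots,a_k$ to be chosen in turn; and (ii) adding one approval on $b_1$ flips the very first selection to $b_1$, after which the symmetric dynamics force $b_2,\ldots,b_k$ to be chosen. Concretely, the template is to have two ``engine'' groups of voters, one for each side: the $A$-engine approves a carefully chosen pattern on $\{a_1,\ldots,a_k\}$, and the $B$-engine does likewise on $\{b_1,\ldots,b_k\}$; a small ``tipping'' set of voters is added whose role is to make $a_1$ and $b_1$ essentially tied at the very start (with the tie-breaker favoring $a_1$), so that the added approval on $b_1$ strictly tips the first choice to $b_1$ without otherwise disturbing either engine.

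For GreedyCC, I exploit the fact that as soon as a candidate is picked, all voters who approve him or her become inert. I thus design the two engines so that the $A$-engine voters approve $a_1$ together with one fresh $a_i$ each (and analogously for $B$); picking $a_1$ does not deactivate any ``pure'' $a_i$-supporter, so the marginal scores of $a_2,\ldots,a_k$ stay strictly above those of any $b_j$, and the induction proceeds. For GreedyPAV, the same skeleton works but the analysis must instead track marginal scores under the harmonic weights $1/i$; I choose multiplicities of the voter types so that after $a_1$ is selected, every $a_i$ with $i\geq 2$ has marginal score strictly larger than any $b_j$, and this strict gap is preserved through all $k$ greedy steps. For Phragm\'en, the engines must be tuned in terms of the time at which each candidate becomes affordable: I arrange, for each side, that the times at which $a_1,a_2,\ldots,a_k$ become purchasable under the $A$-engine are strictly smaller than any purchase time on the $B$-side conditional on $a_1$ having been bought, and symmetrically with the roles of $A$ and $B$ swapped.

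The main obstacle in all three cases is the second part of each construction: verifying that after the very first pick, the cascade is really complete, i.e., that no single $b_j$ (resp. $a_j$) can ever overtake the opposite engine during the remaining $k-1$ steps. For GreedyCC this amounts to a straightforward counting argument, for GreedyPAV a careful choice of multiplicities makes the strict inequalities at each step explicit, and for Phragm\'en I expect to need an induction on the step index where the induction hypothesis maintains an invariant about the total money currently held by the voters on each side. Once these cascades are verified, the witnesses $(E, E')$ for each rule show that the \textsc{Add}-robustness level is at least $k$; since by definition it cannot exceed $k$, the level equals $k$, and by Proposition~\ref{prop:k-robustness} the same holds for \textsc{Remove}.
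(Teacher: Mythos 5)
Your overall skeleton is exactly the paper's: invoke Proposition~\ref{prop:k-robustness} to reduce to \textsc{Add}-robustness, then exhibit a single witness pair $(E,E')$ on $C = A \cup B$ with $A = \{a_1,\ldots,a_k\}$, $B = \{b_1,\ldots,b_k\}$, tie-breaking favoring the $a_i$'s, where one added approval of $b_1$ flips the winning committee from $A$ to $B$. However, the concrete construction you sketch has a gap that is not just a matter of tuning multiplicities: your ``engines'' are one-sided (the $A$-engine approves only candidates in $A$, the $B$-engine only candidates in $B$, with a small tipping set on $a_1$ and $b_1$), and such a design cannot produce both cascades. To see why, write $s_{-c}(x)$ for the GreedyCC marginal score of $x$ after $c$ has been selected (the number of uncovered approvers of $x$). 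If no voter approves both $a_1$ and $b_2$, then selecting $a_1$ in $E$ does not lower $b_2$'s marginal, so $s_{-a_1}(b_2) = s(b_2) \geq s_{-b_1}(b_2)$; symmetrically, if no voter approves both $b_1$ and $a_2$, then $s'_{-b_1}(a_2) = s(a_2) \geq s_{-a_1}(a_2)$. The $E$-cascade needs $s_{-a_1}(a_2) \geq s_{-a_1}(b_2)$, and chaining these gives $s'_{-b_1}(b_2) \leq s'_{-b_1}(a_2)$; since the tie-break favors $a_2$, the $E'$-cascade then fails at its second step. So cross-side approvals are unavoidable, and your remark that ``picking $a_1$ does not deactivate any pure $a_i$-supporter'' attacks the wrong half of the problem: the real issue is that picking $a_1$ must actively \emph{deactivate supporters of every} $b_j$.

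The paper's construction achieves exactly this by making every voter approve precisely one candidate from $A$ and one from $B$: $k-1$ voters on $\{a_1,b_1\}$, one voter on each $\{a_1,b_i\}$ and each $\{a_i,b_1\}$, and $2k-3$ voters on each $\{a_i,b_i\}$ for $i \geq 2$, plus the empty voter $v_0$ who gains the approval of $b_1$ in $E'$. Every candidate then has exactly $2(k-1)$ approvals; selecting $a_1$ leaves every other $a_i$ at full marginal score (no voter approves two $A$-candidates) while strictly decreasing every $b_j$, and symmetrically in $E'$ after $b_1$ is selected. A further benefit you would forgo with per-rule constructions is that this single election works verbatim for all three rules: the GreedyPAV score drops are just the GreedyCC drops halved, and for Phragm\'en the pairing structure guarantees disjoint funded groups for the members of whichever side wins the first purchase. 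I would recommend adopting a cross-paired construction of this kind rather than trying to repair the one-sided engines.
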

\begin{proof}
  Let us fix committee size $k$.  Since our rules are resolute, by
  Proposition~\ref{prop:k-robustness} it suffices to show their
  $k$-\textsc{Add}-robustness.  To this end, we will form two
  elections, $E = (C,V)$ and $E' = (C,V')$, where $E'$ can be obtained
  from $E$ by adding a single approval, such that for each of our
  rules the unique winning committee for $E$ is disjoint from the one
  for $E'$.

  We let the candidate set be $C = A \cup B$, where
  $A = \{a_1, \ldots, a_k\}$ and $B = \{b_1, \ldots, b_k\}$, and
  we set the tie-breaking order to be:
  \[
    a_1 \pref \cdots \pref a_k \pref b_1 \pref \cdots \pref b_k.
  \]
  The voter collection of election $E$ is as follows:
  \begin{enumerate}
  \item We have $k-1$ voters approving $\{a_1, b_1\}$.
  \item For each $i \in \{2,\ldots,k\}$ we have a single voter approving $\{a_1, b_i\}$.
  \item For each $i \in \{2,\ldots,k\}$ we have a single voter approving $\{a_i, b_1\}$.
  \item For each $i \in \{2,\ldots,k\}$ we have $2k-3$ voters approving $\{a_i, b_i\}$.
  \item We have voter $v_0$ with empty approval set.
  \end{enumerate}
  As the reader can verify, every candidate from $C$ is approved by
  exactly $2(k-1)$ voters. Election $E'$ is defined in the same way,
  except that voter $v_0$ approves~$b_1$. For each of our rules we
  will show that committee $A$ wins in election $E$ and committee $B$
  wins in election $E'$.

  We first consider GreedyCC and election $E$. At the beginning of the
  first iteration, each candidate has score $2(k-1)$ and, due to the
  tie-breaking order, GreedyCC chooses~$a_1$. As a consequence, in the
  second iteration the score of~$b_1$ decreases by $k-1$ points and
  the scores of candidates $b_2, \ldots, b_k$ decrease by~$1$~point
  each.  As there is no voter who approves two different candidates
  from~$A$, each of the candidates in $\{a_2, \ldots, a_k\}$ still has
  $2(k-1)$ points. Hence~$a_2$ is selected. The same reasoning applies
  to the following $k-2$ iterations, during which the remaining
  members of $A$ are chosen.

  On the other hand, for election $E'$ GreedyCC outputs committee $B$.
  To see this, note that in the first iteration $b_1$ has score higher
  by one point than every other candidate and, so, is selected
  irrespective of the tie-breaking order.  Then the scores of
  candidates in $A$ decrease below $2(k-1)$, but the scores of
  candidates in $\{b_2, \ldots, b_k\}$ remain equal to $2(k-1)$. Hence
  these candidates are selected in the following iterations.
  Since GreedyCC outputs committee $A$ for election $E$ and committee
  $B$ for election $E'$, we see that GreedyCC is
  $k$-\textsc{Add}-robust.

  The case of GreedyPAV is analogous to that of GreedyCC. Indeed, the
  only difference between the operation of GreedyPAV and GreedyCC on
  elections $E$ and $E'$ is that when under GreedyCC the score of some
  candidate drops by $x$, the score of the same candidate drops by
  $\nicefrac{x}{2}$ under GreedyPAV (naturally, this is not a general
  feature of these rules, but one that is specific to our two
  elections). As a consequence, both rules choose the same committees
  for $E$ and $E'$ and, so, GreedyPAV is $k$-\textsc{Add}-robust.

  Finally, we consider the Phragm{\'e}n rule.  Since in election $E$
  each candidate is approved by $2(k-1)$ voters, the first moment when
  a group of voters can purchase a candidate is
  $\nicefrac{1}{2(k-1)}$. Due to the tie-breaking order, they first
  buy $a_1$, followed by $a_2$ and all the other members of $A$ (as no
  two members of $A$ are approved by the same voter, for each member
  of $A$ there is a group of voters with sufficient funding). Thus the
  rule outputs committee $A$. In election $E'$, candidate $b_1$ has
  $2k-1$ approvals and is purchased at time $\nicefrac{1}{2k-1}$. As a
  consequence, all voters who approve $b_1$ have their budgets reset
  to $0$. The next time when there is a group of voters that can
  purchase a candidate is $\nicefrac{1}{2(k-1)}$.  One can verify that
  at this point for each candidate in $\{b_2, \ldots, b_k\}$ there is
  a disjoint group of voters that has a unit of money, whereas there
  is no such group for any member of~$A$. Hence, Phragm{\'e}n outputs
  committee $B$. As in the previous two cases, this means that
  Phragm{\'e}n is $k$-\textsc{Add}-robust.
\end{proof}

The reader may worry that the above results hold due to the fact that
our rules are resolute, but this is not the case. For example, if one
used parallel-universes tie-breaking (where a rule outputs all the
committees that could win for some way of resolving the internal ties)
then the result would still hold. For example, for GreedyCC it would
suffice to add one more voter approving both $a_1$ and $b_1$ to
elections $E$ and $E'$. Then, GreedyCC with parallel-universes
tie-breaking would output both $A$ and $B$ as the winning committees
for $E$, but for $E'$ it would output only $B$. This would show its
$k$-\textsc{Add}-robustness (from the point of view of committee $A$).

\section{Robustness Radius: Complexity Results}

In this section we show that the \textsc{Robustness-Radius} problem is
$\np$-complete for each of our rules, for both adding and removing
approvals. We observe that for each of our rules and operation type,
the respective \textsc{Robustness-Radius} problem is clearly in
$\np$. Indeed, it suffices to nondeterministically guess which
approvals to add/remove, compute the winning committees before and
after the change (since our rules are resolute, in each case there is
exactly one), and verify that they are different. Hence, in our proofs
we will focus on showing $\np$-hardness. To this end, we give
reductions from the following variant of the \textsc{X3C} problem (it
is well known that this variant of the problem remains
$\np$-complete~\cite{gon:j:x3c}; note that in the standard variant of
X3C one does not assume that each member of $U$ belongs to exactly
three sets).

\begin{definition}
  An instance of \textsc{RX3C} consists of a universe set $U = \{u_1,$
  $\ldots, u_{3n}\}$ and a family $\calS = \{S_1, \ldots, S_{3n}\}$ of
  three-element subsets of $U$, such that each member of $U$ belongs
  to exactly three sets from~$\calS$. We ask if there is a collection
  of $n$ sets from $\calS$ whose union is $U$ (i.e., we ask if there
  is an exact cover of $U$).
\end{definition}

All our reductions follow the same general scheme: Given an instance
of \textsc{RX3C} we form an election where the sets are the candidates
and the voters encode their content. Additionally, we also have
candidates $p$ and $d$. Irrespective which operations we perform
(within a given budget), all the set candidates are always selected, but
by performing appropriate actions we control the order in which this
happens. If the order corresponds to finding an exact cover, then
additionally candidate $p$ is selected. Otherwise, our rules select
$d$.

We first focus on adding approvals and then argue why our proofs adapt
to the case of removing approvals.

\begin{theorem}\label{thm:greedy-cc}
  GreedyCC-\textsc{Add-Robustness-Radius} is $\np$-complete.
\end{theorem}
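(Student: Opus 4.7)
The plan is to reduce from \textsc{RX3C}, since membership in $\np$ is immediate: guess up to $B$ approvals to add, run GreedyCC on both the original and modified election, and compare the unique winning committees. So the work lies in $\np$-hardness.

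Given an \textsc{RX3C} instance $(U,\calS)$ with $|U|=|\calS|=3n$, I would construct an election with a candidate $s_j$ per set $S_j$, two distinguished candidates $p$ and $d$, and enough dummy candidates to satisfy $|C|\geq 2k$. Committee size would be $k=3n+1$, the tie-breaking order $s_1 \pref \cdots \pref s_{3n} \pref d \pref p$, and the budget $B=n$. The voters would fall into three blocks: \emph{element voters} that, for each $u_i$, approve the three $s_j$ with $u_i \in S_j$; a \emph{calibration block} that equalises the initial GreedyCC scores of all set candidates while pushing them far above those of $p$ and $d$, so that the first $3n$ picks are forced to be set candidates and the only real contest is for the last seat; and a \emph{tiebreaking block} supporting $d$ whose contribution to $d$'s residual score vanishes precisely when every element voter has been covered by the first $n$ picks. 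Voters supporting $p$ are calibrated so that, once the tiebreaking block is extinguished, $p$ beats $d$ by one point; otherwise $d$ wins the last seat by tie-breaking.

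Correctness decomposes as usual. Forward: given an exact cover $\{S_{j_1},\ldots,S_{j_n}\}$, spend the $n$ adds to boost $s_{j_1},\ldots,s_{j_n}$ just enough that they become the first $n$ picks; after these iterations every element voter is covered, the tiebreaking block collapses, and $p$ replaces $d$ in the final seat, flipping the winning committee. Backward: any successful attack must flip the final seat from $d$ to $p$, which requires the tiebreaking block to be fully covered by the first $n$ committee members; since each set covers exactly three elements and only $n$ picks are available before the contested seat, the first $n$ picked sets must then partition $U$, i.e.\ encode an exact cover.

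The main obstacle is the backward direction: ruling out clever uses of the $n$-add budget that change the committee without routing through an exact cover---for example, boosting $p$ directly, attacking $d$ via scattered approvals, or promoting a partial cover combined with ``cheap'' later tweaks. The remedy is to inflate the calibration and tiebreaking blocks so that (a)~the gap between $d$ and $p$ cannot be closed by $n$ direct additions unless the tiebreaking block is disabled, (b)~every element left uncovered after $n$ picks leaves enough of the tiebreaking block active to keep $d$ ahead, and (c)~each single added approval moves a set candidate up by at most one position in the greedy order, so $n$ adds promote at most $n$ sets. Verifying these rigidity properties by tracing the first $n$ iterations of the greedy process is where the detailed bookkeeping lives; the conceptual heart of the reduction is that GreedyCC's acute sensitivity to early picks mirrors the combinatorics of \textsc{RX3C}, and the \textsc{RX3C} regularity (each $u_i$ in exactly three sets) makes the score arithmetic balance.
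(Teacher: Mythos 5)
Your high-level plan (reduce from \textsc{RX3C}, set candidates plus $p$ and $d$, committee size $3n+1$, budget $n$, control which sets the greedy rule picks first) matches the paper's, but the mechanism you describe for deciding the $p$-versus-$d$ seat is broken, and fixing it is exactly where the paper's main construction idea lives. You say the calibration block forces the first $3n$ picks to be set candidates, so that ``the only real contest is for the last seat,'' while the tiebreaking block's contribution to $d$ ``vanishes precisely when every element voter has been covered by the first $n$ picks.'' These two statements are incompatible under GreedyCC. The marginal score of $d$ at any iteration depends only on the \emph{current} committee: a voter who approves $d$ and some set candidates contributes to $d$'s marginal score exactly if none of those set candidates is already seated. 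If the last seat is contested at iteration $3n+1$, the committee already contains all $3n$ set candidates, every element is covered (each $u_i$ lies in some $S_j$), and the tiebreaking block is extinguished \emph{regardless} of whether an exact cover exists or of what happened in the first $n$ iterations. GreedyCC has no memory of the order of earlier picks, so your reduction cannot distinguish yes- from no-instances: $p$ would win (or lose) the last seat identically in both cases.

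The missing idea is that the $p$-versus-$d$ contest must be \emph{timed} to happen at iteration $n+1$, immediately after exactly $n$ set candidates have been chosen --- that is the only moment at which ``is every element covered?'' is equivalent to ``do the chosen sets form an exact cover?''. The paper engineers this with a block of $T$ voters for \emph{every pair} $\{S_i,S_j\}$: each set candidate starts with score about $3nT$, and each greedy selection of a set candidate knocks $T$ off every remaining set candidate's score, so after exactly $n$ selections the survivors sit at about $2nT$ while $p$ and $d$ sit at $2nT+4nt$ (plus $xt$ for $d$, where $x$ counts uncovered elements). Hence $p$ or $d$ is seated at iteration $n+1$, with $p$ winning by tie-breaking exactly when $x=0$, and the remaining $2n$ set candidates fill the rest. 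Your rigidity observations (the $n$-approval budget can only reorder the first $n$ set picks and cannot close $\Theta(t)$-sized gaps) are sound and do appear in the paper's argument, but they only become useful once the decisive comparison is relocated to iteration $n+1$. Without that timing device your backward direction fails at its first step.
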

\begin{proof}
  We give a reduction form the \textsc{RX3C} problem. Our input
  consists of the universe set $U = \{u_1, \ldots, u_{3n}\}$ and
  family $\calS = \{S_1, \ldots, S_{3n}\}$ of three-element subsets of
  $U$. We know that each member of $U$ belongs to three sets
  from~$\calS$.
  We introduce two integers, $T = 10n^5$ and $t = 10n^3$ and we
  interpret both as large numbers, with $T$ being significantly larger
  than $t$.  We form an election $E = (C,V)$ with candidate set
  $C = \{S_1, \ldots, S_{3n}\} \cup \{p,d\}$, and with the following
  voters:
  \begin{enumerate}
  \item For each $S_i \in \calS$, there are $T$ voters that approve
    candidate $S_i$.
  \item For each two sets $S_i$ and $S_{j}$, there are $T$ voters
    that approve candidates $S_i$ and $S_{j}$.
  \item There are $2nT+4nt$ voters that approve $p$ and $d$.
  \item For each $u_\ell \in U$, there are $t$ voters that approve $d$
    and those candidates $S_i$ that correspond to the sets containing
    $u_\ell$.
  \item There are $n$ voters who do not approve any candidates.
  \end{enumerate}
  The committee size is $k = 3n+1$ and the budget is $B = n$.  We
  assume that the tie-breaking order among the candidates is:
  \[
    S_1 \pref S_2 \pref \cdots \pref S_{3n} \pref p \pref d.
  \]
  Prior to any bribery, each candidate $S_i$ is approved by $3nT+3t$
  voters, $p$ is approved by $2nT+4nt$ voters, and $d$ is approved by
  $2nT+7nt$ voters.

  Let us now consider how GreedyCC operates on this election. Prior to
  the first iteration, all the set candidates have the same score,
  much higher than that of $p$ and $d$. Due to the tie-breaking order,
  GreedyCC chooses $S_1$. As a consequence, all the voters that
  approve $S_1$ are removed from consideration and the scores of all
  other set candidates decrease by $T$ (or by $T+t$ or $T+2t$, for the
  sets that included the same one or two elements of $U$ as
  $S_1$). GreedyCC acts analogously for the first $n$ iterations,
  during which it chooses a family $\calT$ of $n$ set elements (we
  will occasionally refer to $\calT$ as if it really contained the
  sets from~$\calS$, and not the corresponding candidates).
  
  After the first $n$ iterations, each of the remaining $2n$ set
  candidates either has $2nT$, $2nT+t$, $2nT+2t$, or $2nT+3t$ approvals
  (depending how many sets in $\calT$ have nonempty intersection with
  them). Let $x$ be the number of elements from $U$ that do not belong
  to any set in $\calT$.  Candidate $p$ is still approved by $2nT+4nt$
  voters, whereas $d$ is approved by $2nT+4nt + xt$ voters. Thus at
  this point there are two possibilities. Either $x = 0$ and, due to
  the priority order, GreedyCC selects $p$, or $x > 0$ and GreedyCC
  selects $d$. In either case, in the following $2n$ iterations it
  chooses the remaining $2n$ set candidates (because after the
  $n+1$-st iteration the score of that among $p$ and $d$ who remains
  drops to zero or nearly zero). If candidate $p$ is selected without
  any bribery, then it means that we can find a solution for the
  \textsc{RX3C} instance using a simple greedy algorithm. In this
  case, instead of outputting the just-described instance of
  GreedyCC-\textsc{Add-Robustness-Radius}, we output a fixed one, for
  which the answer is \emph{yes}. Otherwise, we know that without any
  bribery the winning committee is $\{S_1, \ldots, S_{3n},d\}$. We
  focus on this latter case.
  
  We claim that it is possible to ensure that the winning committee
  changes by adding at most $n$ approvals if and only if there is an
  exact cover of $U$ with~$n$ sets from $\calS$.  Indeed, if such a
  cover exists, then it suffices to add a single approval for each of
  the corresponding sets in the last group of voters (those that
  originally do not approve anyone). Then, by the same analysis as in
  the preceding paragraph, we can verify that the sets forming the
  cover are selected in the first $n$ iterations, followed by $p$,
  followed by all the other set candidates.
  
  For the other direction, let us assume that after adding some $t$
  approvals the winning committee has changed. One can verify that
  irrespective of which (up-to) $n$ approvals we add, in the first $n$
  iterations GreedyCC still chooses $n$ set candidates. Thus, at this
  point, the score of $p$ is at most $2nT+4nt+n$ and the score of $d$ is
  at least $2nT+4nt+xt-n$ (where $x$ is the number of elements from
  $U$ not covered by the chosen sets; we subtract $n$ to account for
  the fact that $n$ voters that originally approved $d$ got approvals
  for the candidates selected in the first $n$ interations). If at
  this point $d$ is selected, then in the following $2n$ iterations
  the remaining set candidates are chosen and the winning committee
  does not change. This means that $p$ is selected. However, this is
  only possible if $x = 0$, i.e., if the set candidates chosen in the
  first $n$ iterations correspond to an exact cover of $U$.
\end{proof}

A very similar proof also works for the case of GreedyPAV. The main
difference is that now including a candidate in a committee does not
allow us to forget about all the voters that approve him or her (the
proof is in the appendix).

\begin{theorem}\label{thm:greedy-pav}
  GreedyPAV-\textsc{Add-Robustness-Radius} is $\np$-complete.
\end{theorem}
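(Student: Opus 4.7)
The plan is to mirror the reduction from Theorem~\ref{thm:greedy-cc}: from an \textsc{RX3C} instance $(U,\calS)$ I would build an election with the same candidate set $\{S_1,\ldots,S_{3n},p,d\}$, the same tie-breaking order $S_1\pref\cdots\pref S_{3n}\pref p\pref d$, the same committee size $k=3n+1$, and the same budget $B=n$. The intended semantics is also unchanged: without bribery the winning committee is $\{S_1,\ldots,S_{3n},d\}$ (else we output a trivial \emph{yes}-instance), and $B=n$ added approvals can flip the outcome to $\{S_1,\ldots,S_{3n},p\}$ iff an exact cover exists. What must change are the voter-group sizes, because under GreedyPAV a voter who already has $j$ approved candidates in $W$ still contributes $1/(j+1)$ to the marginal score of each further approved candidate, instead of $0$ as in GreedyCC.

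I would keep the five voter groups of Theorem~\ref{thm:greedy-cc} (single-set approvers, pair approvers, $\{p,d\}$-approvers, the $t$ voters per $u_\ell$ who approve $d$ and the three sets containing $u_\ell$, and the null voters), but pick parameters $T\gg t\gg n$ with polynomial separations large enough that the harmonic discount factors $1/2,1/3,\ldots$ cannot disturb the two coarse events I need. The first event is that during the first $n$ iterations GreedyPAV selects set candidates only: this follows because each set candidate starts with $\score$ roughly $3nT$ while $p$ and $d$ start near $2nT$, and no single iteration lowers any set candidate's marginal $\score$ by more than $O(T)$, so a gap of order $nT$ between sets and $\{p,d\}$ is preserved for $n$ rounds when $T$ is polynomially larger than the accumulated discount contributions.

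The crux is the $(n{+}1)$-st iteration. Writing $W$ for the $n$ set candidates already chosen and $c_\ell=|\{S\in W:u_\ell\in S\}|$ (so $\sum_\ell c_\ell=3n$), the PAV marginal score of $d$ is
\[
  2nT + 4nt + \sum_{\ell=1}^{3n}\frac{t}{1+c_\ell},
\]
while $p$ has $2nT+4nt$ plus a bounded contribution from any added approvals. By convexity, $\sum_\ell 1/(1+c_\ell)$ is minimized (value $3n/2$) exactly when every $c_\ell=1$, i.e.\ when $W$ is an exact cover; any deviation increases the sum by at least a constant (e.g.\ $1/2-1/3=1/6$ per misallocated element). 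Choosing $t$ sufficiently larger than $n$ makes this gap dwarf the $B=n$ units of bribery, so at iteration $n{+}1$ candidate $p$ beats $d$ iff $W$ is an exact cover. Given this, both directions follow the pattern of Theorem~\ref{thm:greedy-cc}: an exact cover $\calT$ yields a bribery by placing one approval on each $S\in\calT$ in a distinct null voter (boosting the cover sets enough to be selected first while the tie-breaking order handles the rest); conversely, any successful bribery must still leave the set candidates dominant during the first $n$ iterations and must flip $p$ above $d$ in iteration $n{+}1$, which by the above calculation forces an exact cover to have been selected.

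The main obstacle is the bookkeeping: I have to verify that cumulative harmonic contributions from type-2 voters (contributing $T/2,T/3,\ldots$ after their partners are selected) and from type-4 voters (whose contributions to other set candidates shift as the $c_\ell$'s grow) never accumulate enough to break either (i) the $nT$-scale dominance of set candidates over $\{p,d\}$ during the first $n$ iterations, or (ii) the $t$-scale threshold separating ``cover'' from ``non-cover'' at iteration $n{+}1$. The parameter separation $T\gg t\gg n$ is chosen precisely to make both checks mechanical, and the remainder of the argument (membership in $\np$, handling the trivial case, and the equivalence with RX3C) is identical to the GreedyCC proof.
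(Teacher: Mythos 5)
Your high-level plan matches the paper's, and your convexity computation of $d$'s marginal score at iteration $n{+}1$ is the right central observation. But as written the reduction fails, and your own formulas show why. You keep exactly the five voter groups of the GreedyCC proof, so the \emph{only} voters approving $p$ are the $p$/$d$-voters, who contribute identically to $p$ and to $d$. Hence at iteration $n{+}1$ candidate $d$'s marginal score exceeds $p$'s by $t\sum_{\ell}\frac{1}{1+c_\ell}$ minus at most $n$ units of bribery, and by your own convexity bound this difference is at least $\frac{3nt}{2}-n>0$ \emph{even when $W$ is an exact cover}. Under GreedyCC the exact-cover case zeroes out the universe voters' contribution to $d$ (they are ``removed''), which is what lets $p$ tie $d$ and win on tie-breaking; under GreedyPAV that contribution only drops to $\frac{3nt}{2}$, never to $0$. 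So with your construction $d$ is always selected, every \textsc{RX3C} instance maps to a no-instance, and the conclusion ``$p$ beats $d$ iff $W$ is an exact cover'' does not follow. The paper's fix is to add a sixth group of $1.5nt$ voters who approve only $p$, calibrated so that $p$'s score equals $d$'s minimum possible score ($2.5nT+4nt+1.5nt$) exactly in the cover case, at which point the tie-breaking order $p\pref d$ selects $p$; any non-cover still gives $d$ an extra $x\geq \nicefrac{t}{3}\gg n$.

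There is a second, smaller calibration you also miss: under GreedyPAV the pair-voters whose partner set was already selected still contribute $\nicefrac{T}{2}$ each, so after $n$ iterations the remaining $2n$ set candidates have marginal score about $2nT+0.5nT$, not $2nT$. If the $p$/$d$ group keeps the GreedyCC size $2nT+4nt$, the remaining set candidates outscore $p$ and $d$ at iteration $n{+}1$ and both are deferred to the last iteration, by which time the cover/non-cover distinction has been erased. The paper therefore enlarges the $p$/$d$ group to $2nT+0.5nT+4nt$. Both fixes are of the ``adjust the group sizes'' kind you anticipate in the abstract, but neither is supplied, and the first one in particular requires introducing a voter group that does not exist in the GreedyCC construction.
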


The proof for the case of Phragm{\'e}n-\textsc{Add-Robustness-Radius}
is similar in spirit to the preceding two, but requires careful
calculation of the times when particular groups of voters can purchase
respective candidates.

\begin{theorem}\label{thm:phr}
  Phragm{\'e}n-\textsc{Add-Robustness-Radius} is $\np$-complete.
\end{theorem}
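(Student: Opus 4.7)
The plan is to follow the same blueprint as Theorems~\ref{thm:greedy-cc} and~\ref{thm:greedy-pav}: reduce from \textsc{RX3C} with candidate set $C = \calS \cup \{p,d\}$, committee size $k = 3n+1$, budget $B = n$, and the tie-breaking order $S_1 \pref \cdots \pref S_{3n} \pref p \pref d$. The voter construction will ensure that, regardless of which (up to)~$n$ approvals are added, Phragm{\'e}n purchases all $3n$ set candidates and, as its $(3n+1)$-th purchase, either $p$ or $d$. Without bribery the default winner of this race is $d$; the goal is to show that bribery of size $n$ can flip it to $p$ if and only if the \textsc{RX3C} instance admits an exact cover.

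The heart of the construction is the design of the $p$-vs-$d$ race. Candidate $p$ is approved only by a fixed pool of ``pure'' voters, so its earliest purchase time is essentially hard-wired. Candidate $d$ is approved by a baseline pool calibrated so that, on its own, $d$ would lose narrowly to $p$, together with, for each $u_\ell \in U$, a block of $t$ voters who additionally approve exactly the three sets containing $u_\ell$. Each $S_i$ also has a large private pool of unique supporters, which makes all $3n$ set candidates cheap enough to be purchased before the $p$-vs-$d$ race begins. Crucially, each $u_\ell$-block has its money reset to zero as soon as \emph{any} of the three sets containing $u_\ell$ is purchased; the money available to $d$ at round $3n+1$ therefore records precisely which $u_\ell$ have \emph{not} been covered by earlier set purchases. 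Bribery with budget $n$ is used exactly as in the preceding proofs: adding one approval to each of $n$ otherwise-empty dummy voters ``promotes'' $n$ chosen sets so that they are bought first. One then argues that $p$ beats $d$ if and only if every $u_\ell$-block has been drained before the race, i.e., if and only if the $n$ promoted sets form an exact cover of $U$.

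The main obstacle, compared with the greedy rules, is that Phragm{\'e}n's purchase times depend on \emph{cumulative} money that is redistributed at every round, so small additive changes in approvals can propagate through the entire timeline. The technical work is to pick the sizes of the private $S_i$-pools, the baseline $p$- and $d$-pools, and the multiplicity $t$ so that (i) the ``all sets first, then the race'' structure is robust against \emph{any} $n$ added approvals, and (ii) the gap in $d$'s accumulated money between the covering and non-covering cases comfortably exceeds every timing perturbation a bribery of size $n$ could introduce. This is an arithmetic balancing act in the spirit of the choice of $T$ and $t$ in the proof of Theorem~\ref{thm:greedy-cc}, rather than a new conceptual ingredient.
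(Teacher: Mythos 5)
Your overall reduction template (reducing from \textsc{RX3C} with candidates $\calS \cup \{p,d\}$, $k=3n+1$, $B=n$, empty dummy voters used to promote $n$ chosen sets, and a $p$-versus-$d$ race decided by which universe blocks have had their money reset) is the right one and matches the paper's. However, there is a structural inconsistency in your timeline that breaks the very signal you rely on. You stipulate that Phragm{\'e}n purchases \emph{all} $3n$ set candidates before the $p$-vs-$d$ race, and simultaneously that ``the money available to $d$ at round $3n+1$ records precisely which $u_\ell$ have \emph{not} been covered by earlier set purchases.'' If every set candidate has already been bought by round $3n+1$, then every $u_\ell$-block has already been drained (each $u_\ell$ lies in three purchased sets), so the race carries no information about whether an exact cover exists; at best you would be left with a delicate argument about \emph{when} each block was drained, which you do not develop and which is much harder to control under arbitrary placements of the $n$ added approvals.

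The paper's proof resolves this by interleaving the purchases rather than front-loading them. The universe blocks ($t^2$ voters per element) are sized so that at the earliest affordable time $A=\frac{1}{T+3t^2}$ only up to $n$ set candidates can be bought: each set needs the joint funds of its three element-blocks, and $3n$ blocks can disjointly fund at most $n$ sets. The remaining $2n$ set candidates cannot be afforded again until a later time $C$, and the $p$/$d$ purchase is calibrated to fall at a time $B_{pd}$ with $A<B_{pd}<C<D$. Then the signal is exactly the one you want: if the first $n$ sets form an exact cover, every $d$-universe block is drained at time $A$ and $p$ is bought first; otherwise some block retains its money and $d$ is bought earlier, at $B_d<B_p$. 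You should restructure your construction so that only $n$ set candidates are purchasable before the race and verify the resulting chain of timing inequalities for concrete pool sizes; as written, the reduction does not distinguish yes-instances from no-instances.
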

\begin{proof}
  We give a reduction from \textsc{RX3C}. As input, we get a universe
  set $U = \{u_1, \ldots, u_{3n}\}$ and a family
  $\calS = \{S_1, \ldots, S_{3n}\}$ of size-$3$ subsets of $U$. Each
  element of $U$ appears in exactly three sets from $\calS$.
  We ask if
  there is a collection of $n$ sets that form an exact cover of $U$.

  Our reduction proceeds as follows. First, we define two numbers,
  $T = 900n^{12}$ and $t = 30n^5$.  The intuition is that both numbers
  are very large, $T$ is significantly larger than $t^2$, and $t$ is
  divisible by $6n$ (the exact values of $T$ and $t$ are not crucial;
  we did not minimize them but, rather, used values that clearly work
  and simplify the reduction). We form an election $E = (C,V)$ with
  candidate set $C = \{S_1, \ldots, S_{3n}\} \cup \{p,d\}$ and the
  following voters:
  \begin{enumerate}
  \item For each $S_i \in \calS$, there are $T$ voters that approve
    candidate $S_i$. We refer to them as the $\calS$-voters.
  \item For each $u_\ell \in U$, there are $t^2$ voters that approve
    those candidates $S_i$ that correspond to the sets containing
    $u_\ell$. We refer to them as the universe voters.  For each
    $u_\ell \in U$, $\frac{t}{3n}$ of $u_\ell$'s universe voters
    additionally approve candiate $d$.  We refer to them as the
    $d$-universe voters.
    
  \item There are $T+3t^2-2t$ voters that approve both $p$ and $d$. We refer to
    them as the $p$/$d$-voters.
  \item There are $\frac{t}{6n}$ voters that approve
    $p$. We refer to them as the $p$ voters.
  \item There are $n$ voters who do not approve any candidate, and to
    whom we refer as the empty voters.
  \end{enumerate}
  The committee size is $k = 3n+1$ and the budget is $B = n$. The tie-breaking
  order is:
  \[
    S_1 \pref S_2 \pref \cdots \pref S_{3n} \pref d \pref p.
  \]
  In this election, each candidate $S_i$ is approved by exactly
  $T+3t^2$ voters, $d$ is approved by $(T+3t^2-2t) + t$ voters, and
  $p$ is approved by $(T+3t^2-2t) +\frac{t}{6n}$
  voters.

  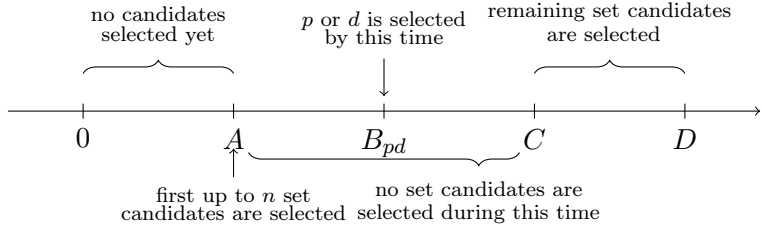
\begin{figure}[t]
    \centering
    \begin{tikzpicture}[scale=1]
      \draw[->] (0,0) -- (10,0);
      \foreach \x\l in {1/$0$,3/$A$,5/$B_{pd}$,7/$C$,9/$D$}{
        \draw (\x,0.1) -- (\x,-0.1) node[anchor=north] {\l};
        
      }
      \draw[decorate, decoration = {brace, amplitude=5pt}] (7,0.5) -- (9,0.5) node[pos=0.5,above=16pt,black]{{\scriptsize remaining set candidates}} node[pos=0.5,above=7pt,black]{{\scriptsize are selected\phantom{y}}};      

      \draw[decorate, decoration = {brace, amplitude=5pt}] (1,0.5) -- (3,0.5) node[pos=0.5,above=16pt,black]{{\scriptsize no candidates}} node[pos=0.5,above=7pt,black]{{\scriptsize selected yet}};      

      \draw[<-] (3,-0.5) -- (3,-0.88) node[anchor=north] {{\scriptsize first up to $n$ set}} node[below=7pt]{{\scriptsize candidates are selected}};
      \draw[<-] (5,0.2) -- (5,0.7)  node[anchor=south,above=7pt] {{\scriptsize $p$ or $d$ is selected}} node[anchor=south,above=0pt] {{\scriptsize by this time}};

      \draw[decorate, decoration = {brace, amplitude=5pt,mirror,aspect=0.85}] (3.2,-0.55) -- (6.8,-0.55) node[pos=0.85,below=7pt,black]{{\scriptsize no set candidates are}} node[pos=0.85,below=16pt,black]{{\scriptsize selected during this time}};     
      
    \end{tikzpicture}
    
    \caption{\label{fig:phr}Timeline for the Phragm{\'e}n rule acting
      on the election from Theorem~\ref{thm:phr}.}
  \end{figure}

  Let us consider how Phragm{\'e}n operates on this election (we
  encourage the reader to consult Figure~\ref{fig:phr} while reading
  the following text). First, we observe that when we reach time point
  $D = \frac{1}{T}$ then all the not-yet-selected set candidates (for
  whom there still is room in the committee) are selected. Indeed, at
  time $D$ the $\calS$-voters collect enough funds to buy them. On the
  other hand, the earliest point of time when some voters can afford
  to buy a candidate is $A = \frac{1}{T+3t^2}$. Specifically, at time
  $A$ set voters and universe voters jointly purchase up to $n$ set
  candidates (selected sequentially, using the tie-breaking order and
  taking into account that when some candidate is purchased then all
  his or her voters spend all their so-far collected money). Let us
  consider some candidate $S_i$ that was not selected at time point
  $A$. Since $S_i$ was not chosen at $A$, at least $t^2$ of the $3t^2$
  universe voters that approve $S_i$ paid for another candidate at
  time $A$. Thus the earliest time when voters approving $S_i$ might
  have enough money to purchase him or her is $C$, such that:
  \[
    \underbrace{C(T+2t^2)}_{\substack{\text{money earned by those voters} \\ \text{who did not spend it at time $A$}}} +
    \underbrace{(C-A)t^2}_{\substack{\text{money earned between times $C$ and $A$ by uni-} \\ \text{verse voters who paid for candidates at time $A$}}}
    = 1.
  \]
  Simple calculations show that $C = \frac{1+At^2}{T+3t^2}$. Noting
  that $A = \frac{1}{T+3t^2}$, we have that $C = A + A^2t^2$.
  However, prior to reaching time point $C$, either candidate $p$ or
  candidate $d$ is selected. Indeed, at time point
  $B_{pd} = \frac{1}{T+3t^2-2t}$ the $p$/$d$ voters alone would have
  enough money to buy one of their candidates:  We show that $B_{pd} <
  C$, or, equivalently,  that $\frac{1}{B_{pd}} > \frac{1}{C}$.  It
  holds that $\frac{1}{B_{pd}} = T+3t^2 - 2t$ and:
  \[
    \frac{1}{C} = \frac{1}{A+A^2t^2} = \frac{1}{A} \cdot \frac{1}{1+At^2}
    = \frac{T+3t^2}{1+\frac{t^2}{T+3t^2}}
    = \frac{(T+3t^2)^2}{T+4t^2}.
  \]
  By simple transformations, $\frac{1}{B_{pd}} > \frac{1}{C}$ is equivalent to:
  \[
    (T+3t^2 - 2t)(T+4t^2) > (T+3t^2)^2.
  \]
  The left-hand side of this inequality can be expressed as:   
  \[
    ((T+3t^2) - 2t)((T+3t^2) + t^2) = (T+3t^2)^2 +
    \underbrace{(t^2-2t)(T+3t^2) - 2t^3}_{\substack{\text{positive because $t^2-2t > 2t$} \\ \text{due to our assumptions} }},
  \]
  and, hence, our inequality holds. All in all, we have
  $A < B_{pd} < C < D$.

  It remains to consider which among $p$ and $d$ is selected.  If $p$
  were to be selected, then it would happen at time point
  $B_p = \frac{1}{T^2+3t^2-2t + \frac{t}{6n}}$. This is when the
  $p$/$d$- and $p$ voters would collect enough money to purchase $p$
  (assuming the former would not spend it on $d$ earlier).  Now, if at
  time $A$ fewer than $n$ set candidates were selected, then at least
  $\frac{t}{3n}$ of the $d$-universe voters would retain their money
  and, hence, $d$ would be selected no later than at time point
  $B_d = \frac{1}{T^2+3t^2-2t+\frac{t}{3n}} < B_p$.  On the other
  hand, if at time point $A$ exactly $n$ set candidates were selected
  (who, thus, would have to correspond to an exact cover of $U$) then all
  the $d$-universe voters would lose their money and voters who
  approve~$d$ would not have enough money to buy him or her before
  time $B_p$.  Indeed, in this case the money accumulated by voters
  approving $d$ would at time $B_p$ be:
  \[
    X = \underbrace{\frac{T+3t^2-2t}{T+3t^2-2t +
        \frac{t}{6n}}}_{\text{money of the $p$/$d$ voters}} +
    \underbrace{t \left( \frac{1}{T+3t^2-2t+\frac{t}{6n}} -
        \frac{1}{T+3t^2}\right)}_{\substack{\text{money collected by
          the $d$-universe} \\ \text{voters between time points $A$ and
          $B_p$}}}
  \]
  We claim that $X < 1$, which is equivalent to the following
  inequality (where we replace $T+3t^2$ with $M$; note that
  $M = \frac{1}{A}$):
  \begin{align*}
    \frac{M-t}{M-2t+\frac{t}{6n}} < 1 + \frac{t}{M} = \frac{M+t}{M} 
  \end{align*}
  By simple transformations, this inequality is equivalent to:
  \begin{align*}
  0 < \frac{Mt+t^2}{6n} - 2t^2, 
  \end{align*}
  which holds as $t > 6n$ and $M > 2t^2$.  To conclude, if at time
  point $A$ there are $n$ set candidates selected for the committee,
  then $p$ is selected for the committee as well.

  Finally, we observe that irrespective of which among $p$ and $d$ is
  selected for the committee, the voters that approve the other one do
  not collect enough money to buy him or her until time $D$.  Thus the
  winning committee either consists of all the set candidates and $d$,
  or of all the set candidates and $p$, where the latter happens
  exactly if at time $A$ candidates corresponding to an exact cover of
  $U$ are selected.\medskip

  If at point $A$ Phragm{\'e}n would choose candidates corresponding
  to an exact cover of $U$ then our reduction outputs a fixed
  yes-instance of Phragm{\'e}n-\textsc{Add-Robustness-Radius} (as we
  have just found that an exact cover exists). Otherwise we output the
  formed election with committee size $k = 3n+1$ and budget $B = n$.
  To see why this reduction is correct, we make the following three
  observations:
  \begin{enumerate}
  \item By adding $n$ approvals, we cannot significantly modify any of
    the time points $A$, $B_d$, $B_p$, $B_{pd}$, $C$, and $D$ from the preceding
    analysis, except that we can ensure which (up to) $n$ sets are
    first considered for inclusion in the committee just before time
    point $A$.
  \item If there is a collection of $n$ sets in $\calS$ that form an exact
    cover of $U$, then---by the above observation---we can ensure that these
    sets are selected just before time point $A$ (by adding one approval for each
    of them among $n$ distinct empty voters). Hence, if there is an exact cover
    then---by the preceding discussions---we can ensure that the winning committee changes
    (to consist of all the set candidates and $p$).
  \item If there is no exact cover of $U$, then no matter where we add
    (up to) $n$ approvals, candidate $d$ gets selected and, so, the
    winning committee does not change (in particular, even if we add
    $n$ approvals for $p$).
  \end{enumerate}
  Since the reduction clearly runs in polynomial time, the proof is complete.
\end{proof}

It remains to argue that \textsc{Remove-Robustness-Radius} also is
$\np$-complete for each of our rules. This, however, is easy to
see. In each of the three proofs above, we had budget $B = n$ and $n$
voters with empty approval sets.  We were using these $n$ voters to add
a single approval for each of the $n$ sets forming an exact cover,
leading to the selection of $p$ instead of $d$. For the case of
removing approvals, it suffices to replace the $n$ empty voters with
$3n$ ones, such that each set candidate is approved by exactly one of
them, and to set the budget to $B = 2n$. Now we can achieve the same
result as before by deleting approvals for those set candidates that
do not form an exact cover. Hence the following holds.

\begin{corollary}
  Let $\calR$ be one of GreedyCC, GreedyPAV and
  Phragm{\'e}n. $\calR$-\textsc{Remove-Robustness-Radius} is
  $\np$-complete.
\end{corollary}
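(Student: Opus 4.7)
The plan is to adapt each of the three \textsc{Add} reductions from Theorems~\ref{thm:greedy-cc}, \ref{thm:greedy-pav}, and~\ref{thm:phr}, following the sketch preceding the corollary. Membership in~$\np$ is immediate by the same guess-and-check argument as in the \textsc{Add} setting: nondeterministically pick up to~$B$ approvals to delete, compute the (unique) winning committees before and after, and accept iff they differ.

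For hardness, I would use each previous reduction verbatim except for two changes: replace the $n$ empty voters with $3n$ new voters, one per set candidate, where the $i$-th such voter approves only $S_i$; and raise the budget to $B = 2n$. The correspondence with \textsc{RX3C} then becomes: a size-$n$ exact cover $\calT \subseteq \calS$ of $U$ yields a bribery that deletes the singleton approval from each of the $2n$ designated voters whose set lies \emph{outside} $\calT$. After these deletions, the set candidates in $\calT$ each retain one extra approval relative to the other $2n$ set candidates, exactly mirroring the role of the added approvals in the \textsc{Add} constructions. The remainder of the analysis---which sets get chosen first, and which of $p$ or $d$ is then picked---carries over with only a uniform additive shift of $+1$ in each set candidate's baseline approval count (and, for Phragm\'en, correspondingly tiny uniform shifts in the time points $A$, $B_{pd}$, $B_p$, $B_d$, $C$, and $D$). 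Because $T$ and $t$ were chosen polynomially large and the gaps separating the critical events are of order $t$ or larger, these shifts do not invalidate any inequality established in the \textsc{Add} proofs.

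The main obstacle is the converse direction: if no exact cover of $U$ exists, I must rule out \emph{every} choice of up to $2n$ approval deletions, not merely those targeting the new designated voters. The plan here is to reuse the parameter margins already built into the \textsc{Add} constructions. For GreedyCC and GreedyPAV, after any $\le 2n$ deletions the first $n$ selected candidates still correspond to some collection of $n$ sets from $\calS$; if this collection fails to cover $U$, then $d$ still enjoys a score lead of at least $xt - O(n)$ over $p$, forcing $d$ into the committee and preserving the winning committee. For Phragm\'en the analogous check is that any adversarial deletions among $\calS$-voters, universe voters, or $p/d$-voters shift the critical times $B_d$, $B_p$, and $B_{pd}$ by amounts of order $1/T^2$, far too little to overcome the $\Theta(1/t)$ separations on which the original case analysis relied; in particular, unless the first $n$ purchased set candidates form an exact cover, enough $d$-universe voters retain their money to buy $d$ before~$p$ becomes affordable. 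Once these margin bounds are verified, correctness of each adapted reduction follows exactly as in the \textsc{Add} case.
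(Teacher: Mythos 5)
Your construction is exactly the one the paper uses: replace the $n$ empty voters with $3n$ singleton voters (one per set candidate), set the budget to $B = 2n$, and delete the approvals of the $2n$ sets outside the cover so that the cover sets gain a relative $+1$ advantage, mirroring the \textsc{Add} reductions. You are in fact more explicit than the paper about the converse direction (ruling out arbitrary deletions via the $O(n)$-versus-$\Theta(t)$ margins already present in the \textsc{Add} proofs), but the approach is essentially identical.
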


\section{Robustness to Random Noise: Experimental Results}

Let us now move on to an experimental analysis of our rules'
robustness to random noise. The main idea of the experiment is as
follows: First, we generate a number of elections from a given
distribution and, for each of them, we compute its winning
committee. Then, we perform a given number of random operations, such
as adding or removing approvals (specified via a \emph{perturbation
  level}, described below), and we compute the proportion of elections
that change their outcome and the average number of committee members
that get replaced. We do so for each of our rules (including AV), for
several distributions, and for a range of perturbation levels. Our
main observation is that the results for AV, PAV, and Phragm{\'e}n are
quite similar to each other, but those for CC stand out. Further, the
results may quite strongly depend on the distribution of votes. Below
we describe our setup and present the results in more detail.

\paragraph{Generating Elections.}

To generate synthetic elections, we use the \emph{resampling} model
recently introduced by
Szufa et al.~\cite{szu-fal-jan-lac-sli-sor-tal:c:sample-approval}. This model is
parameterized by two numbers, $p, \phi \in [0,1]$, and to generate an
election with candidate set $C = \{c_1, \ldots, c_m\}$ and voter
collection $V = (v_1, \ldots, v_n)$ it proceeds as follows: First, we
choose a central approval set $A$ of $\lfloor p \cdot m \rfloor$
candidates from $C$ (uniformly at random from all subset of $C$ of
this cardinality).  Then, for each voter $v_i$ we set his or her
initial approval set $A(v_i)$ to be equal to $A$.  Finally, for each
voter $v_i$ and each candidate $c_j$, with probability $\phi$ we
remove $c_j$ from the approval set of $v_i$ (if it were there) and,
then, we let $v_i$ approve $c_j$ with probability~$p$. In other words,
initially all voters have the central approval set, but for each
candidate we resample its approval with probability $\phi$.
For example, for $\phi = 0$ each voter has identical approval set,
which includes $\lfloor p \cdot m\rfloor$ candidates, whereas for
$\phi = 1$ each voter approves each candidate independently, with
probability $p$. The closer $\phi$ is to $0$, the more similar are the
votes, and the closer it is to $1$, the more diverse they are.

\paragraph{Perturbation Levels.}
Given an election $E = (C,V)$, a perturbation level $\ell \in [0,1]$
specifies how many operations of adding or removing approvals we are
supposed to perform.  For the \textsc{Add} operation, perturbation
level $\ell$ means that we add an~$\ell$ fraction of all the approvals
not appearing in the election, chosen uniformly at random. (In our
election~$E$, there are $X = \sum_{v \in V}|A(v)|$ approvals in total,
but if each voter approved each candidate then there would be
$|C|\cdot|V|$ approvals. Thus the number of not appearing approvals is
$|C|\cdot|V| - X$.)  For the \textsc{Remove} operation, perturbation
level $\ell$ means removing an $\ell$ fraction of the approvals in the
election, chosen uniformly at random.

\begin{figure}
   \centering
      \includegraphics[width=.85\textwidth]{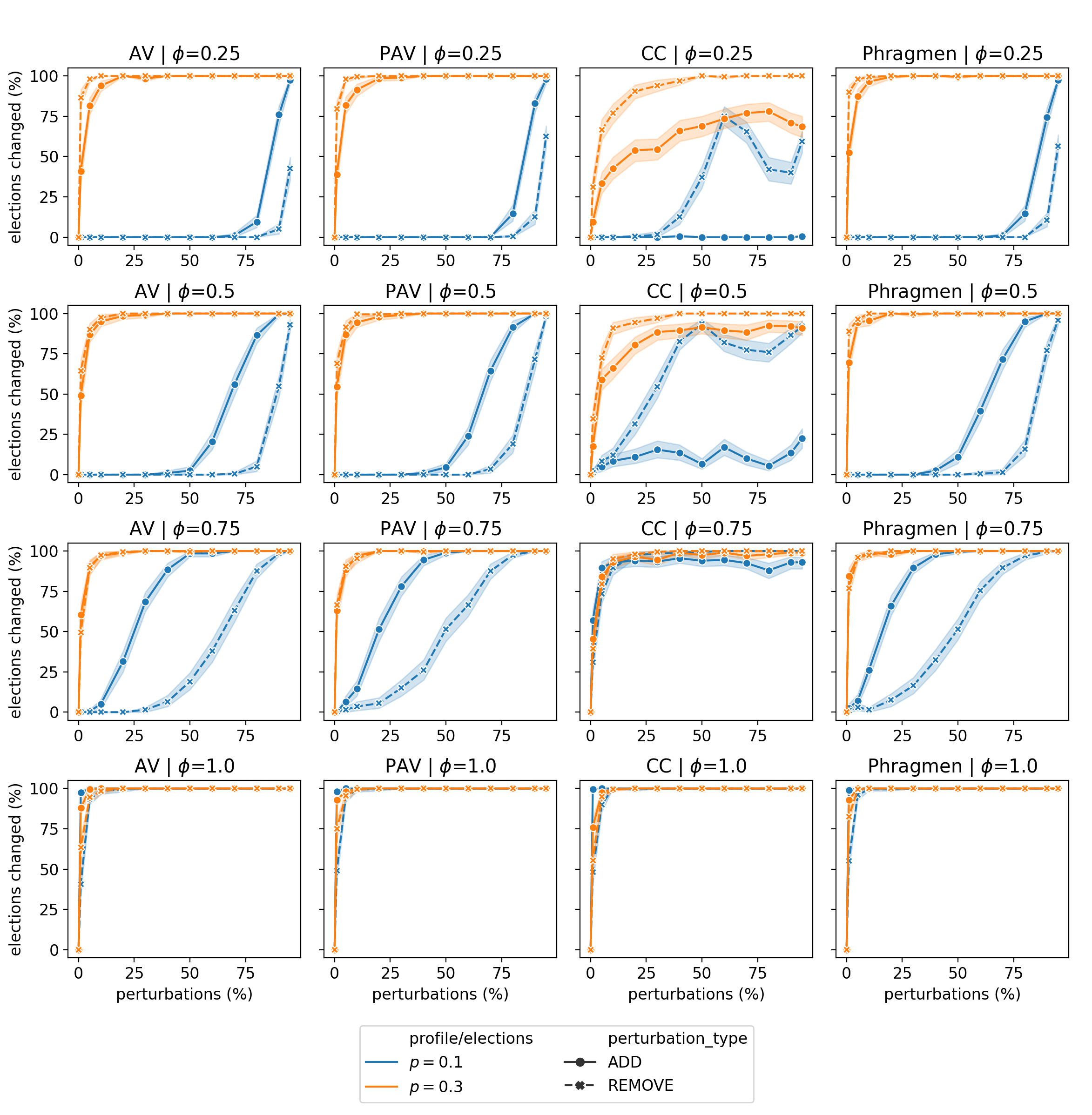}
      \caption{Probabilities of changing elections results for the
        resampling model with $p=0.1$ (blue lines) and $p=0.3$ (orange
        lines) for different rules (columns of the plot) and different
        values of $\phi$ (rows of the plot) and different perturbation
        levels ($x$ axis). Each data point corresponds to $200$
        elections with $100$ candidates, $100$ voters, and committee
        size $10$. Wide light blue and light orange lines represent
        standard deviation.}
      \label{fig:resampling:changed}
\end{figure}

\begin{figure}
   \centering
      \includegraphics[width=.85\textwidth]{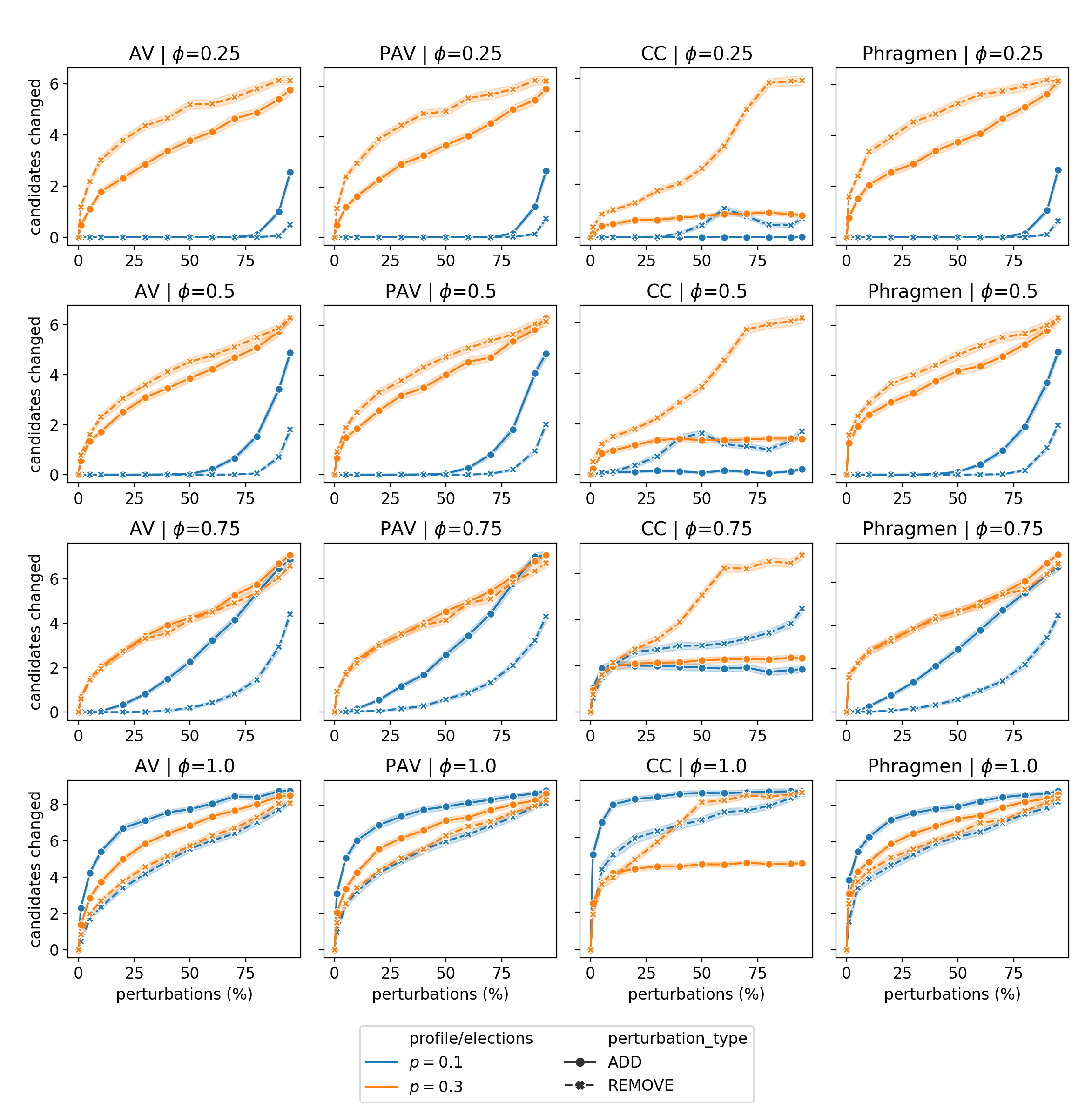}
      \caption{Average number of replaced committee members (the setup of the plot is analogous
        to the one in Figure~\ref{fig:resampling:changed}.}
      \label{fig:resampling:numcandidates}
\end{figure}

\paragraph{Performing the Experiment.}
To perform our experiment for a given multiwinner rule $\calR$, we
consider values of $p \in \{0.1, 0.3\}$, values of
$\phi \in \{0.25, 0.5, 0.75, 1\}$, perturbation levels $\ell$ between
$0$ and $0.95$, with a step of $0.05$ (but also including perturbation
level $0.01$), and operations \textsc{Op} $\in \{$\textsc{Add},
\textsc{Remove}$\}$. For each combination of these parameters we
generate $200$ elections with $100$ candidates and $100$ voters from
the resampling model with parameters $p$ and $\phi$. For each of these
elections we compute its $\calR$ winning committee of size $k = 10$,
apply operations \textsc{Op} as specified by the perturbation level,
and compute the winning committee of the resulting election (of the
same size). We report the fraction of elections for which the two
committees differ and the average number of candidates by which they
differ. We show the results in Figures~\ref{fig:resampling:changed}
and~\ref{fig:resampling:numcandidates}.

\paragraph{Analysis.}
The results in Figures~\ref{fig:resampling:changed}
and~\ref{fig:resampling:numcandidates} show several interesting
patterns. Most strikingly, the results for AV, PAV, and Phragm{\'e}n
are very similar to each other (to the point that it is often quite
difficult to distinguish respective plots), whereas those for CC stand
out sharply. This suggests that the nature of choosing diverse
committees, as done by CC, is quite different from that of choosing
individually excelent ones (as done by AV) or proportional ones (as
done by PAV and Phragm{\'e}n).

Second observation is that it is much easier to affect the results of
elections where the votes approve, on average, $p=0.3$ fraction of the
candidates (orange lines in Figures~\ref{fig:resampling:changed}
and~\ref{fig:resampling:numcandidates}) than those where they
approve, on average, $p=0.1$ fraction of them (blue lines in
Figures~\ref{fig:resampling:changed}
and~\ref{fig:resampling:numcandidates}). This is somewhat
counterintuitive. For example, in AV one would expect that with fewer
approvals in total it would be easier to push some non-winning
candidate into the committee by, say, adding approvals, because the
bar for entering the committee should be low. On the other hand, the
added approvals come from a wider set of possibilities.

The next observation is that the higher the value of $\phi$, the
easier it is to affect the output committees. This is intuitive as for
small values of $\phi$ the votes are highly correlated, whereas for
larger $\phi$ the votes are more random and more fragile to
adding noise.

\section{Summary}

We have complemented the results of Bredereck et
al.~\cite{bre-fal-kac-nie-sko-tal:c:robustness} and Gawron and
Faliszewski~\cite{gaw-fal:c:approval-robustness} by considering the
robustness of GreedyCC, GreedyPAV, and Phragm{\'e}n. We have found
that their robustness levels are equal to the committee size (which
means that even a minimal change to the votes can lead to completely
replacing the winning committee), that the problems of deciding if
modifying their input to a certain extent may change their outcomes
are $\np$-complete, and we have observed how these rules react to
random noise.

\section*{Acknowledgments}
This project has received funding from the European 
    Research Council (ERC) under the European Union’s Horizon 2020 
    research and innovation programme (grant agreement No 101002854).
    
\noindent \includegraphics[width=3cm]{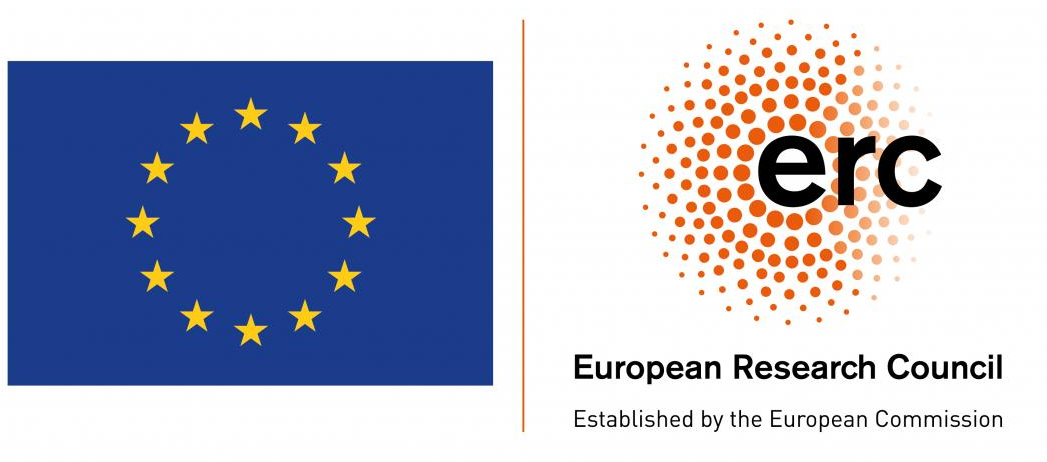}

\bibliography{bib-robust}

\begin{thebibliography}{10}

\bibitem{azi-bri-con-elk-fre-wal:j:justified-representation}
H.~Aziz, M.~Brill, V.~Conitzer, E.~Elkind, R.~Freeman, and T.~Walsh.
\newblock Justified representation in approval-based committee voting.
\newblock {\em Social Choice and Welfare}, 48(2):461--485, 2017.

\bibitem{azi-gas-gud-mac-mat-wal:c:approval-multiwinner}
H.~Aziz, S.~Gaspers, J.~Gudmundsson, S.~Mackenzie, N.~Mattei, and T.~Walsh.
\newblock Computational aspects of multi-winner approval voting.
\newblock In {\em Proceedings of AAMAS '15}, pages 107--115, 2015.

\bibitem{bau-hog:c:robustness}
D.~Baumeister and T.~Hogrebe.
\newblock On the complexity of predicting election outcomes and estimating
  their robustness.
\newblock In {\em Proceedings of EUMAS-21}, pages 228--244, 2021.

\bibitem{bet-sli-uhl:j:mon-cc}
N.~Betzler, A.~Slinko, and J.~Uhlmann.
\newblock On the computation of fully proportional representation.
\newblock {\em Journal of Artificial Intelligence Research}, 47:475--519, 2013.

\bibitem{boe-bre-fal-nie:c:counting-swap-bribery}
N.~Boehmer, R.~Bredereck, P.~Faliszewski, and R.~Niedermeier.
\newblock Winner robustness via swap- and shift-bribery: {Parameterized}
  counting complexity and experiments.
\newblock In {\em Proceedings of IJCAI-2021}, pages 52--58, 2021.

\bibitem{bre-fal-kac-nie-sko-tal:c:robustness}
R.~Bredereck, P.~Faliszewski, A.~Kaczmarczyk, R.~Niedermeier, P.~Skowron, and
  N.~Talmon.
\newblock Robustness among multiwinner voting rules.
\newblock In {\em Proceedings of SAGT-2017}, pages 80--92, 2017.

\bibitem{bri-las-sko:j:apportionment}
M.~Brill, J.~Laslier, and P.~Skowron.
\newblock Multiwinner approval rules as apportionment methods.
\newblock {\em Journal of Theoretical Politics}, 30(3):358--382, 2018.

\bibitem{car:c:stv-bribery}
D.~Cary.
\newblock Estimating the margin of victory for instant-runoff voting.
\newblock Presented at 2011 Electronic Voting Technology Workshop/Workshop on
  Trushworthy Elections, August 2011.

\bibitem{cha-cou:j:cc}
B.~Chamberlin and P.~Courant.
\newblock Representative deliberations and representative decisions:
  {Proportional} representation and the {B}orda rule.
\newblock {\em American Political Science Review}, 77(3):718--733, 1983.

\bibitem{elk-fal-sli:c:swap-bribery}
E.~Elkind, P.~Faliszewski, and A.~Slinko.
\newblock Swap bribery.
\newblock In {\em Proceedings of the 2nd International Symposium on Algorithmic
  Game Theory}, pages 299--310. Springer-Verlag {\it Lecture Notes in Computer
  Science \#5814}, October 2009.

\bibitem{fal-hem-hem:j:bribery}
P.~Faliszewski, E.~Hemaspaandra, and L.~Hemaspaandra.
\newblock How hard is bribery in elections?
\newblock {\em Journal of Artificial Intelligence Research}, 35:485--532, 2009.

\bibitem{fal-rot:b:control-bribery}
P.~Faliszewski and J.~Rothe.
\newblock Control and bribery in voting.
\newblock In F.~Brandt, V.~Conitzer, U.~Endriss, J.~Lang, and A.~D. Procaccia,
  editors, {\em Handbook of Computational Social Choice}, chapter~7. Cambridge
  University Press, 2015.

\bibitem{fal-sko-sli-tal:b:multiwinner-voting}
P.~Faliszewski, P.~Skowron, A.~Slinko, and N.~Talmon.
\newblock Multiwinner voting: {A} new challenge for social choice theory.
\newblock In U.~Endriss, editor, {\em Trends in Computational Social Choice}.
  AI Access Foundation, 2017.

\bibitem{fal-sko-tal:c:bribery-success}
P.~Faliszewski, P.~Skowron, and N.~Talmon.
\newblock Bribery as a measure of candidate success: Complexity results for
  approval-based multiwinner rules.
\newblock In {\em Proceedings of AAMAS-17}, pages 6--14, 2017.

\bibitem{gaw-fal:c:approval-robustness}
G.~Gawron and P.~Faliszewski.
\newblock Robustness of approval-based multiwinner voting rules.
\newblock In {\em Proceedings of ADT-19}, pages 17--31, 2019.

\bibitem{gon:j:x3c}
T.~Gonzalez.
\newblock Clustering to minimize the maximum intercluster distance.
\newblock {\em Theoretical Computer Science}, 38:293--306, 1985.

\bibitem{lac-sko:t:approva-multiwinner-survey}
Martin Lackner and Piotr Skowron.
\newblock Multi-winner voting with approval preferences.
\newblock Technical Report arxiv:2007.01795, arXiv.org, 2022.

\bibitem{mag-riv-she-wag:c:stv-bribery}
T.~Magrino, R.~Rivest, E.~Shen, and D.~Wagner.
\newblock Computing the margin of victory in {IRV} elections.
\newblock Presented at 2011 Electronic Voting Technology Workshop/Workshop on
  Trushworthy Elections, August 2011.

\bibitem{pro-ros-zoh:j:proportional-representation}
A.~Procaccia, J.~Rosenschein, and A.~Zohar.
\newblock On the complexity of achieving proportional representation.
\newblock {\em Social Choice and Welfare}, 30(3):353--362, 2008.

\bibitem{san-elk-lac-fer-fis-bas-sko:c:pjr}
L.~S{\'a}nchez-Fern{\'a}ndez, E.~Elkind, M.~Lackner, N.~Fern{\'a}ndez, J.~A.
  Fisteus, P.~{Basanta Val}, and P.~Skowron.
\newblock Proportional justified representation.
\newblock In {\em Proceedings of AAAI-17}, pages 670--676, 2017.

\bibitem{shi-yu-elk:c:robust-winners}
D.~Shiryaev, L.~Yu, and E.~Elkind.
\newblock On elections with robust winners.
\newblock In {\em Proceedings of AAMAS-13}, pages 415--422, 2013.

\bibitem{sko-fal-lan:j:collective}
P.~Skowron, P.~Faliszewski, and J.~Lang.
\newblock Finding a collective set of items: From proportional
  multirepresentation to group recommendation.
\newblock {\em Artificial Intelligence}, 241:191--216, 2016.

\bibitem{szu-fal-jan-lac-sli-sor-tal:c:sample-approval}
S.~Szufa, P.~Faliszewski, {\L}.~Janeczko, M.~Lackner, A.~Slinko, K.~Sornat, and
  N.~Talmon.
\newblock How to sample approval elections?
\newblock In {\em Proceedings of IJCAI-2022}, pages 496--502, 2022.

\bibitem{thi:j:pav}
T.~Thiele.
\newblock Om flerfoldsvalg.
\newblock In {\em Oversigt over det Kongelige Danske Videnskabernes Selskabs
  Forhandlinger}, pages 415--441. 1895.

\bibitem{xia:c:margin-of-victory}
L.~Xia.
\newblock Computing the margin of victory for various voting rules.
\newblock In {\em Proceedings of EC-12}, pages 982--999. ACM Press, June 2012.

\bibitem{yag:j:owa}
R.~Yager.
\newblock On ordered weighted averaging aggregation operators in multicriteria
  decisionmaking.
\newblock {\em IEEE Transactions on Systems, Man and Cybernetics},
  18(1):183--190, 1988.

\end{thebibliography}

\appendix

\section{Proof of Theorem~\ref{thm:greedy-pav}}

\begin{proof}
  The proof is analogous to that of Theorem~\ref{thm:greedy-cc} and we
  only modify some details of the argument. We reduce from
  \textsc{RX3C} and the input instance consists of the universe set
  $U = \{u_1, \ldots, u_{3n}\}$ and family
  $\calS = \{S_1, \ldots, S_{3n}\}$ of three-element subsets of
  $U$. Each member of $U$ belongs to exactly three sets from~$\calS$.

  We have two integers, $T = 10n^5$ and $t = 10n^3$, interpreted as
  two large numbers, with $T$ significantly larger than $t$.  We form
  an election $E = (C,V)$ with candidate set
  $C = \{S_1, \ldots, S_{3n}\} \cup \{p,d\}$, and with the following
  voters:
  \begin{enumerate}
  \item For each $S_i \in \calS$, there are $T$ voters that approve
    candidate $S_i$.
  \item For each two sets $S_i$ and $S_{j}$, there are $T$ voters
    that approve candidates $S_i$ and $S_{j}$.
  \item There are $2nT+ 0.5nT+4nt$ voters that approve $p$ and $d$.
  \item For each $u_\ell \in U$, there are $t$ voters that approve $d$
    and those candidates $S_i$ that correspond to the sets containing
    $u_\ell$.
  \item There are $1.5nt$ voters who approve $p$.
  \item There are $n$ voters who do not approve any candidates.
  \end{enumerate}
  The committee size is $k = 3n+1$ and the budget is $B = n$.  We
  assume that the tie-breaking order among the candidates is:
  \[
    S_1 \pref S_2 \pref \cdots \pref S_{3n} \pref p \pref d.
  \]
  Prior to adding any approvals and running the rule, each candidate $S_i$ has score
  $3nT+3t$, $p$ has score $2nT + 0.5nT +4nt + 1.5nt$, and $d$ has score $2nT+0.5nT+4nt + 3nt$.

  Let us now consider how GreedyPAV operates on this election. Prior
  to the first iteration, all the set candidates have the same score,
  much higher than that of $p$ and $d$. Due to the tie-breaking order,
  GreedyPAV chooses $S_1$. As a consequence, the scores of all other
  set candidates decrease to $(3n-1)T + 0.5T$ plus some number of
  points from the fourth group of voters (but this part of the score
  is much smaller than that from the first two groups of
  voters). GreedyPAV acts analogously during the first $n$ iterations
  and it chooses a family $\calT$ of $n$ set elements.
  
  After the first $n$ iterations, each of the remaining $2n$ set
  candidates has $2nT + 0.5nT$ points from the first two groups of
  voters and at most $3nt$ points from the fourth group (in fact less,
  but this bound suffices). On the other hand, both $p$ and $d$ have
  at least $2nT + 0.5nT + 4nt$ points and, so, in the next
  iteration the algorithm chooses one of them.
  Specifically, $p$ has score $2nT+0.5nT+4nt+1.5nt$ and $d$ has score
  $2nT+0.5nT+4nt + 1.5nt + x$, where the value of $x$ is as
  follows. If~$\calT$ corresponds to an exact cover of $U$, then $x$
  is $0$ because for each set candidate that GreedyPAV adds during the
  first $n$ iteration, $d$ loses $1.5t$ points from the fourth group
  of voters.  However, if $\calT$ does not correspond to an exact
  cover of $U$, then for at least one added set candidate the score of
  $d$ does not decrease by $1.5t$ but by at most
  $t+(\nicefrac{1}{2}-\nicefrac{1}{3})t$. Thus $x$ is at least
  $\frac{t}{3}$. So, if $\calT$ corresponds to an exact cover of $U$
  then $p$ is selected and otherwise $d$ is.  In either case, the
  score of the unselected one drops by at least $1.25nT$, which means
  that in the following $2n$ iterations the remaining set candidates
  are selected (because each of them has score at least $1.5nT$).
  
  If candidate $p$ is selected without adding any approvals, then it
  means that we can find a solution for the \textsc{RX3C} instnace
  using a simple greedy algorithm. In this case, instead of outputting
  the just-described instance of
  GreedyPAV-\textsc{Add-Robustness-Radius}, we output a fixed one, for
  which the answer is \emph{yes}. Otherwise, we know that without any
  bribery the winning committee is $\{S_1, \ldots, S_{3n},d\}$. We
  focus on this latter case.

  If there is an exact cover of $U$ by sets from $\calS$ then it
  suffices to add a single approval for each of the corresponding sets
  in the last group of voters (those that originally do not approve
  anyone). Then, by the same analysis as above, we can verify that the
  sets forming the cover are selected in the first $n$ iterations,
  followed by $p$, followed by all the other set candidates.
  
  For the other direction, let us note that no matter which $n$
  approvals we add, it is impossible to modify the general scenario
  that GreedyPAV follows: It first chooses $n$ set candidates, then
  either $p$ or $d$ is selected (where the former can happen only if
  the first $n$ set canidates correspond to an exact cover of $U$),
  and finally the remaining $2n$ set candidates are chosen.  This is
  so, because adding $n$ voters results in modifying each of the
  scores by a value between $-n$ and $n$ and such changes do not
  affect our analysis from the preceding paragraphs.  This completes
  the proof.
\end{proof}

\end{document}